\definecolor{subsectioncolor}{rgb}{0,0.541,0.855}
\definecolor{orcidlogocol}{HTML}{A6CE39}
\tikzset{
    orcidlogo/.pic={
        \fill[orcidlogocol] svg{M256,128c0,70.7-57.3,128-128,128C57.3,256,0,198.7,0,128C0,57.3,57.3,0,128,0C198.7,0,256,57.3,256,128z};
        \fill[white] svg{M86.3,186.2H70.9V79.1h15.4v48.4V186.2z}
        svg{M108.9,79.1h41.6c39.6,0,57,28.3,57,53.6c0,27.5-21.5,53.6-56.8,53.6h-41.8V79.1z M124.3,172.4h24.5c34.9,0,42.9-26.5,42.9-39.7c0-21.5-13.7-39.7-43.7-39.7h-23.7V172.4z}
        svg{M88.7,56.8c0,5.5-4.5,10.1-10.1,10.1c-5.6,0-10.1-4.6-10.1-10.1c0-5.6,4.5-10.1,10.1-10.1C84.2,46.7,88.7,51.3,88.7,56.8z};
    }
}
\newcommand\orcidicon[1]{\href{https://orcid.org/#1}{\mbox{\scalerel*{
                \begin{tikzpicture}[yscale=-1,transform shape]
                \pic{orcidlogo};
                \end{tikzpicture}
            }{|}}}}
\def\BibTeX{{\rm B\kern-.05em{\sc i\kern-.025em b}\kern-.08em
    T\kern-.1667em\lower.7ex\hbox{E}\kern-.125emX}}
\def\fnum@figure{\textcolor{subsectioncolor}{\sf Fig.~\thefigure}}
\def\fnum@table{\textcolor{subsectioncolor}{\sf TABLE~\thetable}}
\begin{document}
\title{Two-Stage Electricity Markets with Renewable Energy Integration: Market Mechanisms and Equilibrium Analysis}
\author{Nathan Dahlin$^{\textsuperscript{\orcidicon{0000-0002-7006-251X}}}$ 
and Rahul Jain$^{\textsuperscript{\orcidicon{0000-0003-3786-8682}}}$ 
\thanks{This work was supported by NSF Awards ECCS-1611574 and ECCS-1810447.}
\thanks{Nathan Dahlin and Rahul Jain are with the Department of Electrical and Computer Engineering, University of Southern California, 3740 McClintock Ave, Los Angeles, CA, 90089, USA (e-mail: dahlin@usc.edu, rahul.jain@usc.edu, phone: (213) 631 6101).}
}

\newtheorem{theorem}{Theorem}
\newtheorem{lemma}{Lemma}
\newtheorem{proposition}{Proposition}
\newtheorem{corollary}{Corollary}[theorem]
\newtheorem{definition}{Definition}
\newtheorem{example}{Example}
\newtheorem{remark}{Remark}
\newtheorem{assumption}{Assumption}
\newtheorem{proof}{Proof}
\newtheorem{claim}{Claim}

\maketitle

\begin{abstract}
    We consider a two-stage market mechanism for trading electricity including renewable generation as an alter- native to the widely used multi-settlement market structure. The two-stage market structure allows for recourse decisions by the market operator, which are not possible in today's markets. We allow for different conventional generation cost curves in the forward and the real-time stages. We have considered costs of demand response programs and black outs, and adopt a DC power flow model to account for network constraints. Our first result is to show existence (by construction) of a sequential competitive equilibrium (SCEq) in such a two-stage market. We argue social welfare properties of such an SCEq, and then design a market mechanism that achieves social welfare maximization when the market participants are non-strategic. We also show that under either a \textit{congestion-free} or a \textit{monopoly-free} condition, an efficient Nash equilibrium exists.
\end{abstract}

\begin{IEEEkeywords}
    Power systems economics, power system planning, renewable energy integration, economic dispatch.
\end{IEEEkeywords}

\section{Introduction}
\label{sec:introduction}
Electricity markets in the US (and most of the world) are operated as multi-settlement markets. A large fraction of electricity is traded in  organized markets \cite{DBLP:conf/hicss/MoyeM18}. These are run by an independent system operator (ISO) as multiple forward markets, and a real-time or spot market. The forward markets operate at various time-scales: day-ahead, hour-ahead, etc. while the real-time market is opened five minutes prior. These markets are operated independently of each other, i.e.,  without \textit{recourse} by the market operator though the decision-making of the participants is obviously coupled. 

Until a few years ago, the primary uncertainty while trading in forward markets was in demand forecasts. As we approach real-time, this uncertainty reduces to 5\%, or lower \cite{forecasterror}. With the high penetration of renewables we are seeing today, there is increasing uncertainty in generation as well. For example, in California, renewables account for an estimated nearly 34\% of the total retail energy sales \cite{CECreportJan2019}. In fact, the state has mandated 100\% of power to come from renewables by 2045 \cite{cal100percent}. In other countries, most notably Germany, it has been reported that at times 100\% of power came from renewables \cite{germany100percent}. As can be imagined, this has introduced an order of magnitude greater uncertainty in net demand (demand minus renewable generation) than the current market structures have been able to handle. There is thus a need for new stochastic electricity market designs that can handle such uncertainty.

Given that economic dispatch is a multi-stage process, it only makes sense to couple the markets across various timescales (in various forward and the real-time markets) and allow for recourse decisions. This can enable achievement of greater efficiencies through the marketplace than are possible with the current multi-settlement market structure albeit with stochastic objectives of optimality. While it is an obvious point, it is worth repeating that if appropriate market architectures are not used for electricity trading, this can lead to inefficiencies and even affect reliability. The California electricity market debacle of early 2000s serves as an important lesson \cite{borenstein2002trouble}. 

In this paper, we study how multi-stage markets for electricity trading may work. We include multiple generators and load serving entities (LSEs) and consider two-stages: a stage 1 (forward stage) where only a forecast of renewable generation is available, and a stage 2 (the real-time stage), when exact realization of renewable generation is available. A dispatch decision is taken in stage 1. The economic dispatch problem is formulated as a two-stage stochastic program with recourse. The recourse decision in stage 2 is used to achieve power balance. Uncertainty in demand is ignored though it only adds to uncertainty in net demand which does not change things much.  
The generators own both dispatchable (i.e.,  controllable) primary and ancillary plants, while each LSE owns a nondispatchable or stochastic, renewable source, as is becoming more common \cite{utilityrenewables}.
In addition, it is assumed that the LSEs run a demand response (DR) program that gives each a lever to curtail demand to some extent, and at a cost, in scenarios where generation falls short of the overall demand  \cite{paterakis2017overview}. As an emergency recourse, the LSEs may schedule rolling blackouts at high (societal) cost, which thus must be avoided to the extent possible. We regard all participants to be non-strategic and acting as \textit{price-takers}. Building upon our model in \cite{dahlin2019twostage}, we assume a DC power flow model for the network which includes transmission line capacity constraints.

Our main result is the proof of existence of a sequential competitive equilibrium in this two-stage market with recourse, i.e.,  we show existence of first and second stage prices such that the first and second stage generation and consumption decisions achieve market clearance in stage 2, and power balance is achieved in real-time. We also establish analogues of the first and second fundamental theorems of welfare economics in the multi-stage setting. Together these theorems state that the sequential  competitive equilibrium market allocation achieves market efficiency, i.e.,  social welfare maximization, and market efficiency can be supported by such an equilibrium. 
We then outline a two-stage market mechanism that can be used for multi-stage economic dispatch. Finally, as an extension on our results in \cite{dahlin2019twostage}, we show that under either a \textit{congestion-free} or a \textit{monopoly-free} condition, an efficient Nash equilibrium exists when the LSEs bid marginal optimal value. On the other hand, if the LSEs bid marginal demand response costs, we show through a counterexample that in that case an efficient Nash equilibrium may not exist.

\textbf{Related work.} \emph{Dynamic competitive equilibria:} The closest work to this paper is the dynamic competitive equilibrium framework in \cite{wannegkowshameysha11b} and \cite{wang2011control} which considers a much more general setting, but focuses on showing existence of such a competitive equilibrium, while we are motivated by design of a multi-stage market mechanism for economic dispatch that can be used to achieve it. 
Other related work is \cite{DBLP:conf/cdc/GuptaJPV15}, which does sequential equilibrium analysis but in the current multi-settlement market, and \cite{tang2016dynamic}, which examines the need for dispatching storage, again in current multi-settlement markets, concluding that storage decisions can be left to other entities and the system operator need not dispatch it without any loss in market efficiency. Neither of these papers address the issue of alternative market designs for multi-stage economic dispatch with uncertain generation and recourse action.

\emph{Nash equilibria in electricity markets:} Since the 1950's, the most widely used solution concept in applications of game theory to economics is that of \emph{Nash Equilibrium} \cite{mas1995microeconomic}. Still, the concept has been relatively understudied in the context of electricity markets. In the context of the storage enabled setting of \cite{tang2016dynamic}, sufficient conditions for the existence of an efficient, multi stage Nash equilibrium similar to those presented here are given. In a one-shot Cournot competition setting, \cite{bose2014role} studies the effect of the market maker objective function on the equilibrium existence, and associated network flows. Both of these works focus on the strategic incentives of generators only, while we analyze strategic behavior of both generators and LSEs. 
\section{Electricity Market Model}\label{sec:model}

We consider a connected power network consisting of bus (node) set $\mathcal{N}$, with $|\mathcal{N}| = N$. We index individual nodes with $i$. Located at each node $i$ is a set of generators, $\mathcal{G}_i$, and a set of LSEs $\mathcal{L}_i$ (possibly with $|\mathcal{G}_i|\neq |\mathcal{L}_i|$). The LSEs can be thought of as electric utilities serving a population of end users. We will index individual generators or LSEs at a given node $i$ by $k$, and reference the $k$th generator or LSE at node $i$ as $(i,k)$. Apart from the generators and LSEs, an \emph{independent system operator} (ISO) operates the grid, and also plays the role of the social planner.

We study a two-stage setting, with generation dispatched in the first stage (also referred to as the day ahead or DA stage), and subsequently adjusted in the second stage (real time or RT) in order to fulfill demand at each node. In the following, variables with first subscript $\ell=1,2$ correspond to the $\ell$th market stage. Often when referring to decision variables corresponding to an individual generator or LSE, we will simply use the subscript $k$ and suppress the nodal index $i$. 

For each $k\in\mathcal{L}_i$ and $i\in\mathcal{N}$, we denote the aggregate consumer demand to be met by LSE $(i,k)$ as $D_{k}\geq0$. In our setting, we assume that this aggregate demand is \emph{inelastic}, meaning it is not sensitive to changes in first or second stage electricity prices. 

LSE $(i,k)$'s stochastic, renewable generation $W_k$ is assumed to take on finitely many values in the interval $[0,\overline{w}_k]$, with $\overline{w}_k>0$ for all $k\in\mathcal{L}_i$ and $i\in\mathcal{N}$. Together, the outputs of each LSE's renewable output form the random vector $W$. This vector has associated probability mass function $p=(p_0,\dots,p_{\overline{w}})$, assumed to be common knowledge amongst all market participants, with the probability of scenario (realization) $W=w\in\mathbb{R}^{|\mathcal{L}|}_+$ given by $p_w$, where $\mathcal{L}$ is the set of all LSEs in the network. In all scenarios and at all buses, renewable generation incurs zero marginal cost. Given realized renewable generation $W=w$, we denote the quantity of renewable energy produced by LSE $(i,k)$ as $w_k$ for all $k\in\mathcal{L}_i$ and $i\in\mathcal{N}$. 

At each node $i\in\mathcal{N}$, LSE $i$ may purchase energy in both the DA and RT markets, at nodal prices $P_{1,i}$ and $P_{2,i}(w)$ (given $W=w$), respectively, where $P_{2,i}\,:\,\mathbb{R}^{|\mathcal{L}|}_+\to\mathbb{R}$. Note that in our finite scenario setting, each $P_{2,i}(\cdot)$ may be considered as a finite length vector. We denote the quantities purchased by LSE $(i,k)$ in the first and second stages, given $W=w$ and $i\in\mathcal{N}$ as $y^L_{1,k}$ and $y^L_{2,k}(w)$ for all $k\in\mathcal{L}_i$. 

Each LSE $i$'s constituent consumers are assumed to participate in \emph{demand response} programs, where they are compensated for reducing their consumption at the LSE's request. In more detail, having secured first stage energy quantity $y^L_{1,k}$ and observed $W=w$, LSE $i$ may request its consumer population to reduce their aggregate energy by amount $x^L_{2,k}(w)$, incurring consumer compensation cost $c_{\text{dr},k}(x^L_{2,k}(w))$. 

While the demand response programs are intended to provide LSEs with a cushion against second stage energy prices, in cases of extreme underproduction in renewables, LSEs may effectively schedule a shortfall in energy offered to its consumers by selecting energy and demand reduction quantities in the RT market which sum to less than the \emph{residual demand} $D_k-y^L_{1,k}$. Such an action incurs blackout cost $c_{\text{bo},k}(z^L_{2,k}(w))$, with 
$$z^L_{2,k}(w) = D_{k}-w_k-y^L_{1,k} - y^L_{2,k}(w)-x^L_{2,k}(w).$$

It is assumed that each LSE $i$ is \emph{price taking}, so that its consumption decisions $y^L_{1,k}$ and $y^L_{2,k}(w)$ cannot affect energy prices in either market stage. Thus, given $P_{1,k}$ and $P_{2,k}(\cdot)$, renewable generation vector $W=w$, as well as service decisions $(y^L_{1,k},y^L_{2,k}(w),x^L_{2,k}(w),z^L_{2,k}(w))$ the utility enjoyed by LSE $(i,k)$ is 
\begin{equation}
\label{LSEprofit}
\begin{split}
&\pi^L_{i,k}(y^L_{1,k},y^L_{2,k}(w),x^L_{2,k}(w),z^L_{2,k}(w)):=-P_{1,i}y^L_{1,k} - P_{2,i}(w)y^L_{2,k}(w)- c_{\text{dr},k}(x^L_{2,k}(w)) - c_{\text{bo},k}(z^L_{2,k}(w)).
\end{split}
\end{equation}

Each LSE $(i,k)$ seeks to maximize (\ref{LSEprofit}) with its first and second stage decisions. Specifically, given $y^L_{1,k}$, renewable generation $W=w$, and second stage nodal price schedule $P_{2,i}$, the LSE $(i,k)$'s second stage optimization problem is 
\begin{align}
(\text{LSE2}_{i,k})\,\max_{\substack{y^L_{2,k}(w),x^L_{2,k}(w)\\z^L_{2,k}(w)}}&\,-P_{2,k}(w)y^L_{2,k}(w) - c_{\text{dr},k}(x^L_{2,k}(w))- c_{\text{bo},k}(z^L_{2,k}(w))\\
\label{LSEdemand_const}\text{s.t.}&\quad y^L_{1,k} + y^L_{2,k}(w) + x^L_{2,k}(w) + z^L_{2,k}(w) \geq D_k - w_k\\
&\quad y^L_{2,k}(w)\geq0,\,x^L_{2,k}(w)\geq 0,\,z^L_{2,k}(w)\geq 0
\end{align}

Note that constraint (\ref{LSEdemand_const}) is an inequality in order to maintain feasibility when $D_k-y^L_{1,k}-w_k<0$, i.e.,  when renewable generation exceeds residual demand $D_k-y^L_{1,k}$. 

Denote as $\pi^L_{2,k}(y^L_{1,k};w,P_{2,k})$ the maximum utility achievable in ($\text{LSE2}_{i,k}$), given LSE $(i,k)$'s first stage decision, realized renewable generation and prices. Then, given prices $P_{1,i}$ and $P_{2,i}$, LSE $(i,k)$'s first stage optimization problem is to maximize its summed first stage utility $-P_{1,i}y^L_{1,k}$ and \emph{expected} maximum second stage utility with respect to uncertainty in renewable generation:
\begin{align}
(\text{LSE1}_{i,k})&\quad \max_{y^L_{1,k}}\quad -P_{1,i}y^L_{1,k} + \mathbb{E}[\pi^L_{2,k}(y^L_{1,k};W,P_{2,i})]\\
\text{s.t.}&\quad y^L_{1,k}\geq 0.
\end{align}

Each generator is equipped with two different sources of power generation. First, generator $(i,k)$ owns a primary, dispatchable, nonrenewable power station, which can be scheduled to produce quantity $y^G_{1,k}\in\mathbb{R}_+$ at cost $c_{1,k}(y^G_{1,k})$. This primary plant is assumed to be inflexible for the purposes of our market, i.e.,  once scheduled in the first stage its generation level must remain fixed in the second stage.  

Each generator $(i,k)$ also owns and operates a secondary or ancillary station, e.g., a gas turbine, which can be dispatched quickly in the second stage market. Specifically, having scheduled its primary plant to produce energy amount $y^G_{1,k}$ and observed renewable generation $W=w$, the generator may schedule secondary generation quantity $y^G_{2,k}(w)\in\mathbb{R}_+$, incurring cost $c_{2,k}(y^G_{2,k}(w))$. 

Throughout, we assume that dispatchable or renewable energy produced in excess of consumer demand may be disposed of at zero cost, or placed in a separate spot market not considered here. 

Generation $(i,k)$ is compensated for first stage generation $y^G_{1,k}$ at rate $P_{1,i}$ and, given $W=w$, compensated for second stage generation $y^G_{2,k}(w)$ at rate $P_{2,k}(w)$. As with the LSEs, we assume that each generator $(i,k)$ is price taking, so that given prices $P_{1,i}$, $P_{2,i}$ and $W=w$, along with dispatch decisions $y^G_{1,k}$ and $y^G_{2,k}(w)$, generator $(i,k)$ enjoys profit
\begin{equation}\label{genprofit}
\begin{split}
&\pi^G_{i,k}(y^G_{1,k},y^G_{2,k}(w)) := P_{1,i}y^G_{1,k} - c_{1,k}(y^G_{1,k}) + P_{2,k}(w)y^G_{2,k}(w) - c_{2,k}(y^G_{2,k}(w)).
\end{split}
\end{equation}

Given $P_{2,k}$ and renewable generation scenario $W=w$, generator $(i,k)$ maximizes its second stage profit by solving 
\begin{align}
(\text{GEN2}_{i,k})\quad \max_{y^G_{2,k}(w)}&\quad P_{2,i}(w)y^G_{2,k}(w) - c_{2,k}(y^G_{2,k}(w))\\
\text{s.t.}&\quad y^G_{2,k}(w)\geq0.
\end{align}
Let $\pi^G_{2,k}(w,P_{2,i})$ denote generator $(i,k)$'s maximum achievable second stage profit, given $W=w$ and $P_{2,i}$. Then, in the first stage, generator $(i,k)$ solves
\begin{align}
(\text{GEN1}_{i,k})\quad \max_{y^G_{1,k}}&\quad P_{1,i}y^G_{1,k} - c_{1,k}(y^G_{1,k}) + \mathbb{E}[\pi^G_{2,k}(W,P_{2,i})]\\
\text{s.t.}&\quad y^G_{1,k}\geq 0.
\end{align}

Note here that the expected second stage profit $\mathbb{E}[\pi^G_{2,k}(W,P_{2,i})]$ is a constant when optimizing over $y^G_{1,k}$, reflecting the fact that from the view of each generator $(i,k)$, the two market stage are completely decoupled. We separate the two generator optimization problems to emphasize that the generator observes $W=w$ before it decides $y^G_{2,k}(w)$. 

Finally, the ISO is responsible for enforcing the safe operation of the power grid, which we describe using the DC power flow model \cite{stott2009dc}. The model characterizes the network lines with susceptance matrix $B$, where $B_{ij} = B_{ji}$ gives the susceptance of the line connecting nodes $i$ and $j$. Denote the voltage phase angle at node $i$ in the DA stage as $\theta_{1,i}$, and in the RT stage, given scenario $w$, as $\theta_{2,i}(w)$. Then, the active power flows from node $i$ to node $j$ in each stage are 
$$f_{1,ij} = B_{ij}(\theta_{1,i} - \theta_{1,j}),\quad f_{2,ij}(w) = B_{ij}(\theta_{2,i}(w) - \theta_{2,j}(w)),$$
and the power balance equations at node $i$ in each stage are 
\begin{equation}
\label{powerbalstages}
\begin{split}
\sum_{k\in\mathcal{G}_i}y^G_{1,k} - \sum_{k\in\mathcal{L}_i}y^L_{1,k} &= \sum_jf_{1,ij},\quad\sum_{k\in\mathcal{G}_i}y^G_{2,k}(w) - \sum_{k\in\mathcal{L}_i}y^L_{2,k}(w) = \sum_jf_{2,ij}(w) - \sum_jf_{ij,1}
\end{split}
\end{equation}

Note that summing both equalities in (\ref{powerbalstages}) gives that 
\begin{equation}
\begin{split}
\sum_i\sum_{k\in\mathcal{G}_i}y^G_{1,k} = \sum_i\sum_{k\in\mathcal{L}_i}y^L_{1,k},\quad\sum_i\sum_{k\in\mathcal{G}_i}y^G_{2,k}(w) = \sum_i\sum_{k\in\mathcal{L}_i}y^L_{2,k}(w), \quad\forall\,w.\end{split}
\end{equation}

Letting $f^{\max}_{ij} = f^{\max}_{ji}\geq 0$ denote the flow limit of line $(i,j)$, the ISO also ensures that power flows do not exceed line capacities in either market stage: 
$$f_{1,ij}\leq f^{\max}_{ij}\quad\text{and}\quad f_{2,ij}(w)\leq f^{\max}_{ij}, \quad\forall\,i,j,w$$

In the following sections we will introduce a sequential competitive equilibrium definition for our setting. In order to assess the welfare properties of the allocations included in such equilibria, we specify here a two-stage social planner's problem (SPP) which corresponds to our two settlement market model. The social planner seeks to maximize the aggregate welfare of all market participants. 

Given renewable generation scenario $W=w$, the aggregate welfare is defined as the sum of LSE utilities and generator profits given in (\ref{LSEprofit}) and (\ref{genprofit}): 
\begin{equation}
\begin{split}
\pi^{\text{SPP}}(w)&:= \sum_i\sum_{k\in\mathcal{G}_i}(\pi^G_{i,k}(\hat{y}^G_{1,k},\hat{y}^G_{2,k}(w))+ \sum_{k\in\mathcal{L}_i}\pi^L_{i,k}(\hat{y}^L_{1,k},\hat{y}^L_{2,k}(w),\hat{x}^L_{2,k}(w),\hat{z}^L_{2,k}(w))\\
&= -\sum_i\sum_{k\in\mathcal{G}_i}\left(c_{1,k}(\hat{y}^G_{1,k} + \hat{y}^G_{2,k}(w)\right)- \sum_i\sum_{k\in\mathcal{L}_i}\left(c_{\text{dr},k}(x^L_{2,k}(w))+c_{\text{bo},k}(z^L_{2,k}(w)),\right)
\end{split}
\end{equation}
where $(\hat{y}^G_{1,k},\hat{y}^L_{1,k},\hat{y}^G_{2,k}(w),\hat{y}^L_{2,k}(w),\hat{x}^L_{2,k}(w),\hat{z}^L_{2,k}(w))$ for all $k\in\mathcal{G}_i$ and $\mathcal{L}_i$ and all $i\in\mathcal{N}$ are the planner's first and second stage decisions, the second stage decisions made with knowledge of realized scenario $W=w$. 

Let $\hat{y}^G_1\in\mathbb{R}_+$ denote the vector collecting the planner's first stage generation dispatch decisions for all generators. Similarly defining $\hat{y}^L_1$ and $\hat{\theta}_1$, given $\hat{y}_1 := (\hat{y}^G_1,\hat{y}^L_1,\hat{\theta}_1)$ and $W=w$, the social planner's RT optimization problem is 
\begin{align}
\text{(SPP2)}\,\max_{\substack{\hat{y}^G_{2}(w),\hat{y}^L_2(w)\\\hat{x}^L_2(w),\hat{z}^L_2(w),\hat{\theta}_2(w)}}&\,-\sum_i\sum_{k\in\mathcal{G}_i}c_{2,k}(\hat{y}^G_{2,k}(w))-\sum_i\sum_{k\in\mathcal{G}_i}\left(c_{\text{dr},k}(\hat{x}^L_{2,k}(w)) + c_{\text{bo},k}(\hat{z}^L_{2,k}(w))\right)p_w\\
\text{s.t.}&\quad \sum_{k\in\mathcal{G}_i}\hat{y}^G_{2,k}(w) - \sum_{k\in\mathcal{L}_i}\hat{y}^L_{2,k}(w) =\sum_jB_{ij}(\hat{\theta}_{2,i}(w) - \hat{\theta}_{2,j}(w) - \hat{\theta}_{1,i} +\hat{\theta}_{1,j})\\
&\quad B_{ij}(\hat{\theta}_{2,i}(w) - \hat{\theta}_{2,j}(w))\leq f^{\max}_{ij}, \quad\forall\,i,j\\
\nonumber&\quad \hat{y}^L_{1,k} + \hat{y}^L_{2,k}(w)\\
&\hspace{0.36in}+ \hat{x}^L_{2,k}(w) + \hat{z}^L_{2,k}(w) \geq D_k - w_k, \quad\forall\,i,k\in\mathcal{L}_i\\
&\quad \hat{y}^L_{2,k}(w)\geq 0,\,\hat{x}^L_{2,k}(w)\geq0,\,\hat{z}^L_{2,k}(w)\geq 0, \,\,\forall\,i,k\in\mathcal{L}_i\\
&\quad \hat{y}^G_{2,k}(w)\geq 0\quad \forall\,i,k\in\mathcal{G}_i.
\end{align}
Define $\pi^{\text{SPP}}_2(\hat{y}^G_1,\hat{y}^L_1,\hat{\theta}_1;w)$ as the maximum aggregate welfare achievable in the second stage, given the planner's first stage decisions and $W=w$. Then, the planner's first stage problem is
\begin{align}
(\text{SPP1})\max_{\hat{y}^G_1,\hat{y}^L_1,\hat{\theta}_1}&\quad- \sum_i\sum_{k\in\mathcal{G}_i}c_{1,k}(\hat{y}^G_{1,k}) + \mathbb{E}[\pi^{\text{SPP}}_2(\hat{y}^G_1,\hat{y}^L_1;W)]\\
\text{s.t.}&\quad \sum_{k\in\mathcal{G}_i}\hat{y}^G_{1,k} - \sum_{k\in\mathcal{L}_i}\hat{y}^L_{1,k} = \sum_jB_{ij}(\hat{\theta}_i-\hat{\theta}_j), \quad\forall\,i\\
&\quad B_{ij}(\hat{\theta}_i-\hat{\theta}_j)\leq f^{\max}_{ij}\quad \forall\,i,j\\
&\quad \hat{y}^G_{1,i}\geq 0,\,\hat{y}^L_{1,i}\geq0, \quad\forall\,i.
\end{align}
Let $\hat{y}_2(w):=(\hat{y}^G_2(w),\hat{y}^L_2(w))$ for all $w$, and similarly for $\hat{x}_2(w)$. We refer to optimal solutions $\hat{y}^*_1$ and $(\hat{y}^*_2(\cdot),\hat{x}^*_2(\cdot),\hat{z}^*_2(\cdot))$ to (SPP1) and (SPP2) as \emph{efficient sequential allocations}. The following assumptions are made throughout the following sections.
\begin{assumption}
$c_{1,k},c_{2,k},c_{\text{dr},k}$ and $c_{\text{bo},k}$ are strictly convex, increasing, differentiable and nonnegative over $\mathbb{R}_+$ for all generators and LSEs. 
\end{assumption}

First, we argue that problems (SPP1) and (SPP2) can be combined into a single-stage optimization problem. 
\begin{lemma}
The two-stage problem (SPP1)-(SPP2) is equivalent to the following primal single stage problem:
\begin{align}
\nonumber\text{(SPP-P)}\max_{\substack{\hat{y}^G_1,\hat{y}^G_2,\hat{y}^L_1,\hat{y}^L_2\\\hat{x}^L_2,\hat{z}^L_2,\hat{\theta}_1,\hat{\theta}_2}}&\quad-\sum_i\sum_{k\in\mathcal{G}_i}c_{1,k}(\hat{y}^G_{1,k})+\sum_{w}c_{2,k}(\hat{y}^G_{2,k}(w)p_w\\
&\quad\hspace{1in}-\sum_i\sum_{k\in\mathcal{L}_i}\sum_w\left(c_{\text{dr},k}(\hat{x}^L_{2,k}(w)) + c_{\text{bo},k}(\hat{z}^L_{2,k}(w))\right)p_w\\
\label{SPP_powerbal1}\text{s.t.}&\quad \sum_{k\in\mathcal{G}_i}\hat{y}^G_{1,k} - \sum_{k\in\mathcal{L}_i}\hat{y}^L_{1,k} = \sum_jB_{ij}(\hat{\theta}_{1,i}-\hat{\theta}_{1,j}), \,\,\forall\,i\\
\label{SPP_powerbal2}&\quad \sum_{k\in\mathcal{G}_i}\hat{y}^G_{2,k}(w) - \sum_{k\in\mathcal{L}_i}\hat{y}^L_{2,k}(w) =\sum_jB_{ij}(\hat{\theta}_{2,i}(w)-\hat{\theta}_{2,j}(w)-\hat{\theta}_{1,i}+\hat{\theta}_{1,j}), \,\,\forall\,i,w\\
&\quad B_{ij}(\hat{\theta}_{1,i} - \hat{\theta}_{1,j})\leq f^{\max}_{ij}, \quad\forall\,i,j\\
&\quad B_{ij}(\hat{\theta}_{2,i}(w) - \hat{\theta}_{2,j}(w))\leq f^{\max}_{ij}, \quad\forall\,i,j\\
&\quad \hat{y}^L_{1,k} + \hat{y}^L_{2,k}(w)+\hat{x}^L_{2,k}(w) + \hat{z}^L_{2,k}(w)\geq D_k - w_k, \,\,\forall\,i,k\in\mathcal{L}_i,w\\
&\quad \hat{y}^L_{1,k}\geq 0\quad\forall\,i,k\in\mathcal{L}_i\\
&\quad \hat{y}^L_{2,k}(w)\geq0,\hat{x}^L_{2,k}(w)\geq0,\hat{z}^L_{2,k}(w)\geq0,\,\,\forall\,i,k\in\mathcal{L}_i,w\\
\label{SPP_const_end}&\quad \hat{y}^G_{1,k}\geq0,\hat{y}^L_{2,k}(w)\geq 0, \quad\forall\,i,k\in\mathcal{G}_i,w
\end{align}
\end{lemma}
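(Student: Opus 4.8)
The plan is to recognize (SPP1)--(SPP2) as a two-stage stochastic program with recourse and to ``flatten'' it in the standard way: expand the recourse value function $\pi^{\text{SPP}}_2$ and interchange the finite expectation $\mathbb{E}[\,\cdot\,]=\sum_w p_w(\,\cdot\,)$ with the inner (second-stage) maximization. Write $F_1$ for the feasible set of (SPP1), and, given $\hat{y}_1=(\hat{y}^G_1,\hat{y}^L_1,\hat{\theta}_1)$ and a scenario $w$, write $F_2(\hat{y}_1,w)$ for the feasible set of (SPP2). \emph{Step 1 (well-posedness of the recourse problem).} First I would check that for every $\hat{y}_1\in F_1$ and every $w$ the set $F_2(\hat{y}_1,w)$ is nonempty and $\pi^{\text{SPP}}_2(\hat{y}_1;w)$ is finite. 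Nonemptiness follows by exhibiting the point $\hat{y}^G_2(w)=0$, $\hat{y}^L_2(w)=\hat{x}^L_2(w)=0$, $\hat{\theta}_2(w)=\hat{\theta}_1$, and $\hat{z}^L_{2,k}(w)=\max\{0,\,D_k-w_k-\hat{y}^L_{1,k}\}$: the stage-2 power-balance equality holds because both sides vanish, the line limits are inherited from feasibility of $\hat{\theta}_1$ in (SPP1), and the residual-demand inequality holds by construction. Here the \emph{inequality} form of the demand constraint and the presence of the blackout slack $\hat{z}^L_2$ are exactly what guarantee recourse feasibility. Finiteness holds because the (SPP2) objective is minus a sum of nonnegative convex functions, hence bounded above by $0$, and by Assumption~1 the supremum is attained. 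The set $F_1$ is likewise nonempty (take all variables zero).

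\emph{Step 2 (substitute and swap).} Using the definition of $\pi^{\text{SPP}}_2$ and $\mathbb{E}[\,\cdot\,]=\sum_w p_w(\,\cdot\,)$, the optimal value of (SPP1) equals
\begin{equation}
\nonumber
\max_{\hat{y}_1\in F_1}\,\Big\{-\sum_i\sum_{k\in\mathcal{G}_i}c_{1,k}(\hat{y}^G_{1,k})+\sum_w p_w\,\pi^{\text{SPP}}_2(\hat{y}_1;w)\Big\}.
\end{equation}
For each fixed $\hat{y}_1$, the second-stage problems indexed by distinct scenarios $w$ involve disjoint blocks of decision variables, and every scenario-$w$ constraint couples those variables only to the already-fixed $\hat{y}_1$; there is no cross-scenario coupling. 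Since the weights $p_w$ are nonnegative, this yields
\begin{equation}
\nonumber
\sum_w p_w\max_{\hat{y}_2(w)\in F_2(\hat{y}_1,w)}g_w\big(\hat{y}_2(w)\big)=\max_{\{\hat{y}_2(w)\}_w}\,\sum_w p_w\,g_w\big(\hat{y}_2(w)\big),
\end{equation}
where $g_w$ denotes the stage-2 objective and the right-hand maximization is over the product $\prod_w F_2(\hat{y}_1,w)$; Step 1 ensures every term is finite, so no $\pm\infty$ ambiguities arise.

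\emph{Step 3 (merge the two maximizations).} Substituting the last identity into the expression from Step 2 and combining the outer maximization over $\hat{y}_1$ with the inner maximization over the scenario-indexed stage-2 variables into a single joint maximization, while collecting the (SPP1) constraints together with the (SPP2) constraints for every $w$, reproduces (SPP-P) constraint for constraint. Conversely, any point feasible for (SPP-P) restricts to a point of $F_1$ and, scenario by scenario, to points of $F_2(\hat{y}_1,w)$, with identical objective value. Hence the two formulations share the same optimal value and optimal solutions correspond.

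The only step with genuine content is Step 1: one must verify that the recourse problem is \emph{always} feasible --- which is precisely why the demand constraint is written as an inequality and why the blackout variable $\hat{z}^L_2$ is included --- and that its optimal value is finite, so that $\mathbb{E}[\pi^{\text{SPP}}_2(\,\cdot\,;W)]$ is a well-defined, finite-valued function and the expectation/maximization interchange in Step 2 is legitimate. Everything else is routine bookkeeping of the two constraint sets, together with the observation that the stage-1 objective is a constant inside the inner maximization.
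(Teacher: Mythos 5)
Your argument is correct and is exactly the standard deterministic-equivalent (``flattening'') reformulation of a two-stage stochastic program with finitely many scenarios --- recourse feasibility via the blackout slack, finiteness, exchange of the finite expectation with the scenario-wise inner maximizations, and merging of constraint sets --- which is the same method the paper uses, though it only invokes it by reference to Lemma 1 of \cite{dahlin2019twostage} rather than writing it out. One cosmetic remark: attainment of the second-stage supremum is not needed for equality of the optimal values (finiteness suffices), so your appeal to Assumption~1 at that point is inessential.
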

\begin{proof}
The proof of Lemma 1 follows the same method as the proof of Lemma 1 in \cite{dahlin2019twostage}.
\end{proof}
By ``equivalent'' in Lemma 1, we mean that (SPP-P) and (SPP1) have the same optimal objective value. Moreover, if $(\hat{y}^{G*}_1,\hat{y}^{L*}_1,\hat{y}^{G*}_2(\cdot),\hat{y}^{L*}_2(\cdot),\hat{x}^{L*}_2(\cdot),\hat{z}^{L*}_2(\cdot),\hat{\theta}^*_1,\hat{\theta}^*_2(\cdot))$ is optimal for (SPP-P), then $(\hat{y}^{G*}_2(\cdot),\hat{y}^{L*}_2(\cdot),\hat{x}^{L*}_2(\cdot),\hat{\theta}^*_2(\cdot))$ is optimal for (SPP2). Conversely, if $(\hat{y}^{G*}_1,\hat{y}^{L*}_1,\hat{\theta}^*_1)$ is optimal for (SPP1) and $(\hat{y}^{G*}_2(\cdot),\hat{y}^{L*}_2(\cdot),\hat{x}^{L*}_2(\cdot),\hat{z}^{L*}_2(\cdot),\hat{\theta}^*_2(\cdot))$ is optimal for (SPP2), given the optimal solution to (SPP1), then $(\hat{y}^{G*}_1,\hat{y}^{L*}_1,\hat{y}^{G*}_2(\cdot),\hat{y}^{L*}_2(\cdot),\hat{x}^{L*}_2(\cdot),\hat{z}^{L*}_2(\cdot),\hat{\theta}^*_1,\hat{\theta}^*_2(\cdot))$ is optimal for (SPP-P). Similar results hold for the generator and LSE two-stage problems, giving 
\begin{align}
(\text{GEN-P}_{i,k})\max_{y^G_{1,k},y^G_{2,k}}&\quad P_{1,i}y^G_{1,k} - c_{1,k}(y^G_{1,k})+ \sum_w\left(P_{2,i}(w)y^G_{2,k}(w) - c_{2,k}(y^G_{2,k}(w))\right)p_w\\
\label{gen_nonneg}\text{s.t.}&\quad y^G_{1,k}\geq0,\,y^G_{2,k}(w)\geq0\quad \forall\,w,
\end{align}
and 
\begin{align}
(\text{LSE-P}_{i,k})\max_{\substack{y^L_{1,k},y^L_{2,k}\\x^L_{2,k},z^L_{2,k}}}&\quad -P_{1,i}y^L_{1,k}-\sum_wP_{2,i}(w)y^L_{2,k}(w)p_w-\sum_w\left(c_{\text{dr},k}(x^L_{2,k}(w))+c_{\text{bo},k}(z^L_{2,k}(w))\right)p_w\\
\label{LSEP_const1}\text{s.t.}&\quad y^L_{1,k} + y^L_{2,k}(w)+x^L_{2,k}(w) + z^L_{2,k}(w) \geq D_k-w_k, \quad\forall\,w\\
\label{LSEP_const2}&\quad y^L_{1,k}\geq0,y^L_{2,k}(w)\geq0,x^L_{2,k}(w)\geq0,z^L_{2,k}(w)\geq0, \,\forall\,w.
\end{align}

\section{Sequential Competitive Equilibrium and Efficient Allocations}\label{sec:equilibrium}

In single-stage markets involving a single good, a \emph{competitive equilibrium} is given by a price $P$ and quantity $x$ such that, given $P$, producers find it optimal to produce, and consumers find it optimal to purchase quantity $x$ of the good \cite{mas1995microeconomic}. In such a situation, it is said that the \emph{market clears}, i.e.,  demand equals supply. 

Here, in order to assess the outcome of the two-stage market described in the previous section, we define a two-stage, sequential version of competitive equilibrium, similar to that found in \cite{dahlin2019twostage}. 
\begin{definition}
A \emph{sequential competitive equilibrium} (SCEq) is a tuple $(\overline{y}^{G*}_1,\overline{y}^{L*}_1,\overline{y}^{G*}_2(\cdot),\overline{y}^{L*}_2(\cdot),\overline{x}^{L*}_2(\cdot),P^*_1,P^*_2(\cdot))$ such that, for all $(i,k)$, given $P^*_{1,i}$ and $P^*_{2,i}(\cdot)$, $\overline{y}^{G*}_{1,k}$ is optimal for $(\text{GEN1}_{i,k})$, $\overline{y}^{L*}_{i,k}$ is optimal for $(\text{LSE1}_{i,k})$, and, given $W=w$ and $P^*_2(\cdot)$, $\overline{y}^{G*}_{2,k}(w)$ is optimal for $(\text{GEN2}_{i,k})$ and $(\overline{y}^{L*}_{2,k}(w),\overline{x}^{L*}_{2,k}(w))$ is optimal for $(\text{LSE2}_{i,k})$, and the markets clear in both stages in all scenarios:
\begin{equation}
\begin{split}
    \sum_i\sum_{k\in\mathcal{G}_i}\overline{y}^{G*}_{1,k} = \sum_i\sum_{k\in\mathcal{L}_i}\overline{y}^{L*}_{1,k},\quad \sum_i\sum_{k\in\mathcal{G}_i}\overline{y}^{G*}_{2,k}(w) = \sum_i\sum_{k\in\mathcal{L}_i}\overline{y}^{L*}_{2,k}(w), \quad\forall\,w.
    \end{split}
\end{equation}
\end{definition}
Note that in the SCEq definition, $P^*_2(\cdot)$ and $\overline{y}^{G*}_{2,k}(\cdot)$, $\overline{y}^{L*}_{2,k}(\cdot)$ and $\overline{x}^{L*}_{2,k}(\cdot)$ for each $k$ are functions. Also in problems involving shortfall decisions, a solution consists only of ancillary generation and demand response decisions, as they are enough to determine shortfall decisions, e.g., $(\hat{y}^{G*}_2(\cdot),\hat{y}^{L*}_2(\cdot),\hat{x}^{L*}_2(\cdot))$ solves (SPP2). 

We now study the existence of an SCEq in the two-stage market. Let $\hat{\lambda}^*_1 = (\lambda^*_{1,1},\cdots,\lambda^*_{1,N})^{\top}$ and $\hat{\lambda}^*_2(\cdot) = (\hat{\lambda}^*_{2,1}(\cdot),\dots,\hat{\lambda}^*_{2,N}(\cdot))^{\top}$ denote dual optimal variables associated with constraints (\ref{SPP_powerbal1}) and (\ref{SPP_powerbal2}) in (SPP-P). 

\begin{theorem}\label{compeq_existence}
Under Assumption 1, a sequential competitive equilibrium exists, and is given by $$(\hat{y}^{G*}_1,\hat{y}^{L*}_1,\hat{y}^{G*}_2(\cdot),\hat{y}^{L*}_2(\cdot),\hat{x}^{L*}_2(\cdot),\hat{\lambda}^*_1,\hat{\lambda}^*_2(\cdot)),$$ where $(\hat{y}^{G*}_1,\hat{y}^{L*}_1,\hat{y}^{L*}_2(\cdot),\hat{y}^{L*}_2(\cdot),\hat{x}^{L*}_2(\cdot))$ is the primal solution to (SPP-P), and $(\hat{\lambda}^*_1,\hat{\lambda}^*_2(\cdot))$ is an optimal dual solution to (SPP-P). 
\end{theorem}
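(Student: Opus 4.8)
The plan is to exhibit the competitive equilibrium by identifying the prices with the dual multipliers of the power balance constraints in (SPP-P), and then to verify the three requirements of Definition 1 — second-stage optimality, first-stage optimality, and market clearing — one at a time. By Assumption 1 all cost functions are strictly convex, increasing, differentiable and nonnegative, so (SPP-P) is a convex program with affine constraints; Slater's condition is routine to check (take generation large enough and let blackout absorb any residual demand), so strong duality holds and the KKT conditions are necessary and sufficient for optimality. Let $(\hat y^{G*}_1,\hat y^{L*}_1,\hat y^{G*}_2(\cdot),\hat y^{L*}_2(\cdot),\hat x^{L*}_2(\cdot),\hat z^{L*}_2(\cdot),\hat\theta^*_1,\hat\theta^*_2(\cdot))$ be primal optimal and let $\hat\lambda^*_1,\hat\lambda^*_2(\cdot)$ be the optimal multipliers on (\ref{SPP_powerbal1})–(\ref{SPP_powerbal2}), together with multipliers on the line limits, the residual-demand constraints (\ref{LSEdemand_const})-type inequalities, and the nonnegativity constraints. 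Set $P^*_{1,i}=\hat\lambda^*_{1,i}$ and $P^*_{2,i}(w)=\hat\lambda^*_{2,i}(w)/p_w$ (the division by $p_w$ undoes the scenario-probability weighting that appears in (SPP-P) but not in the individual agents' problems, so that each agent is compensated at the per-scenario nodal price).

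The core step is to show that the KKT system of (SPP-P), when restricted to the variables owned by a single agent $(i,k)$, is \emph{exactly} the KKT system of that agent's individual optimization problem at the stated prices. For a generator: the stationarity condition of (SPP-P) in $\hat y^{G*}_{1,k}$ reads $-c'_{1,k}(\hat y^{G*}_{1,k})+\hat\lambda^*_{1,i}+(\text{nonneg. mult.})=0$, which is precisely the stationarity condition of $(\text{GEN1}_{i,k})$ with $P_{1,i}=\hat\lambda^*_{1,i}$; similarly the stationarity in $\hat y^{G*}_{2,k}(w)$ gives $-c'_{2,k}(\hat y^{G*}_{2,k}(w))p_w+\hat\lambda^*_{2,i}(w)+(\text{nonneg. mult.})=0$, i.e. after dividing by $p_w$ the optimality condition for $(\text{GEN2}_{i,k})$ with $P_{2,i}(w)=\hat\lambda^*_{2,i}(w)/p_w$. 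Since the generator's problem is convex, these conditions are sufficient, so $\hat y^{G*}_{1,k}$ solves $(\text{GEN1}_{i,k})$ and $\hat y^{G*}_{2,k}(w)$ solves $(\text{GEN2}_{i,k})$ (using here the equivalence of the two-stage problems with (GEN-P$_{i,k}$) noted after Lemma 1, and likewise (LSE-P$_{i,k}$) for the LSEs). The LSE case is the same argument applied to $\hat y^{L*}_{1,k},\hat y^{L*}_{2,k}(w),\hat x^{L*}_{2,k}(w),\hat z^{L*}_{2,k}(w)$, now carrying the multiplier of constraint (\ref{LSEdemand_const}) through the stationarity conditions; the key observation is that this same multiplier appears in (SPP-P)'s residual-demand constraint for $(i,k)$, so the decomposition goes through verbatim and $(\hat y^{L*}_{2,k}(w),\hat x^{L*}_{2,k}(w))$ solves $(\text{LSE2}_{i,k})$ and $\hat y^{L*}_{1,k}$ solves $(\text{LSE1}_{i,k})$. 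Finally, market clearing in each stage and scenario follows by summing the nodal power balance constraints (\ref{SPP_powerbal1}) and (\ref{SPP_powerbal2}) over $i$: the phase-angle terms telescope to zero because $B_{ij}=B_{ji}$, leaving $\sum_i\sum_{k\in\mathcal{G}_i}\hat y^{G*}_{1,k}=\sum_i\sum_{k\in\mathcal{L}_i}\hat y^{L*}_{1,k}$ and the analogous second-stage identity, exactly as already recorded in the displayed equation following (\ref{powerbalstages}).

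The step I expect to be the main obstacle is the bookkeeping around the probability weights $p_w$ and the phase-angle/transmission variables $\hat\theta$. One must be careful that dividing $\hat\lambda^*_{2,i}(w)$ by $p_w$ is legitimate ($p_w>0$ may need to be assumed, or scenarios with $p_w=0$ discarded), and that the nodal prices defined this way are the ones that make \emph{each} agent's decoupled problem stationary — in particular for the generator, whose expected second-stage profit is a constant in the first-stage problem, one must check that no cross-scenario coupling is introduced. The $\hat\theta$ variables do not belong to any market participant, so their stationarity conditions in (SPP-P) are not needed for the equilibrium verification per se, but they are what pin down the relation between the $\hat\lambda^*$'s at different nodes and the congestion multipliers; this is only relevant later (for the welfare theorems and the Nash analysis) and can be set aside here. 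Modulo these routine but delicate verifications, the theorem reduces to the decomposition of the convex program's KKT conditions and the telescoping of the network flow constraints.
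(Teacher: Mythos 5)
Your proposal is correct and follows essentially the same route as the paper's proof: take the (SPP-P) power-balance duals as the candidate nodal prices, observe that the (SPP-P) KKT system restricted to each agent's variables is exactly the KKT system of $(\text{GEN-P}_{i,k})$ or $(\text{LSE-P}_{i,k})$ at those prices (matching the residual-demand multipliers $\mu^{L*}_{2,k}(w)=\hat{\mu}^*_{2,k}(w)$), and obtain market clearing directly from the power-balance feasibility constraints. The only difference is your division of $\hat{\lambda}^*_{2,i}(w)$ by $p_w$: the paper's multipliers are already normalized per scenario (the $p_w$ weighting is absorbed into the definition of $\hat{\lambda}^*_{2,i}(w)$, as its KKT conditions show), so this is purely a convention and not a substantive deviation.
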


\begin{proof}
In addition to feasibility constraints (\ref{SPP_powerbal1})-(\ref{SPP_const_end}), the optimal solution to (SPP-P), denoted as $$(\hat{y}^{G*}_1,\hat{y}^{L*}_1,\hat{y}^{G*}_2(\cdot),\hat{y}^{L*}_2(\cdot),\hat{x}^{L*}_2(\cdot),\hat{\theta}^*_1,\hat{\theta}^*_2(\cdot))$$ satisfies the following KKT conditions:
\begin{align}
\label{SPPKKT1}c'_{1,k}(\hat{y}^{G*}_{1,k}) - \hat{\lambda}^*_{1,i}&\geq 0, \,\,\forall\,i,k\in\mathcal{G}_i\\
\label{SPPKKT2}\hat{y}^{G*}_{1,k}\left(c'_{1,k}(\hat{y}^{G*}_{1,k}) - \hat{\lambda}^*_{1,i}\right)&=0, \,\,\forall\,i,k\in\mathcal{G}_i\\
\label{SPPKKT3}c'_{2,k}(\hat{y}^{G*}_{2,k}(w)) - \hat{\lambda}^*_{2,i}(w)&\geq 0, \,\,\forall\,i,k\in\mathcal{G}_i,w\\
\label{SPPKKT4}\hat{y}^{G*}_{2,k}(w)\left(c'_{2,k}(\hat{y}^{G*}_{2,k}(w)) - \hat{\lambda}^*_{2,i}(w)\right)&=0, \,\,\forall\,i,k\in\mathcal{G}_i\\
    \label{SPPKKT5}\hat{\lambda}^*_{1,i}-\sum_w\hat{\mu}^*_{2,k}(w)p_w&\geq 0, \,\,\forall\,i,k\in\mathcal{L}_i,w\\
    \label{SPPKKT6}\hat{y}^{L*}_{1,k}\left(\hat{\lambda}^*_{1,i}-\sum_w\hat{\mu}^*_{2,k}(w)p_w\right)&=0, \,\,\forall\,i,k\in\mathcal{L}_i,w\\
    \label{SPPKKT7}\hat{\lambda}^*_{2,i}(w) - \hat{\mu}^*_{2,k}(w)&\geq 0, \,\,\forall\,i,k\in\mathcal{L}_i,w\\
    \label{SPPKKT8}\hat{y}^{L*}_{2,k}(w)\left(\hat{\lambda}^*_{2,i}(w) - \hat{\mu}^*_{2,k}(w)\right)&=0, \,\,\forall\,i,k\in\mathcal{L}_i,w\\
\label{SPPKKT9}c'_{\text{dr},k}(\hat{x}^{L*}_{2,k}(w)) - \hat{\mu}^*_{2,k}(w)&\geq 0, \forall\,i,k\in\mathcal{L}_i,w\\
\label{SPPKKT10}\hat{x}^{L*}_{2,k}(w)\left(c'_{\text{dr},k}(\hat{x}^{L*}_{2,k}(w)) - \hat{\mu}^*_{2,k}(w)\right)&= 0, \forall\,i,k\in\mathcal{L}_i,w\\
\label{SPPKKT11}c'_{\text{bo},k}(\hat{z}^{L*}_{2,k}(w)) - \hat{\mu}^*_{2,k}(w)&\geq 0, \,\,\forall\,i,k\in\mathcal{L}_i,w\\
    \label{SPPKKT12}\hat{z}^{L*}_{2,k}(w)\left(c'_{\text{bo},k}(\hat{z}^{L*}_{2,k}(w)) - \hat{\mu}^*_{2,k}(w)\right)&= 0,\,\,\forall\,i,k\in\mathcal{L}_i,w\\
    \label{SPPKKT13}\hat{\mu}^*_{2,k}(w)(D_k-\hat{y}^{L*}_{1,k} - \hat{y}^{L*}_{2,k}(w)- \hat{x}^{L*}_{2,k}(w) - \hat{z}^{L*}_{2,k}(w)-w_k)&=0, \,\,\forall\,i,k\in\mathcal{L}_i,w\\
    \label{SPPKKT14}\sum_jB_{ij}\big(\hat{\lambda}^*_{1,i}-\hat{\lambda}^*_{1,j} - \sum_w\big(\hat{\lambda}^*_{2,i}(w)-\hat{\lambda}^*_{2,j}(w)\big)p_w+\hat{\gamma}^*_{1,ij}-\hat{\gamma}^*_{1,ji}\big)&=0, \,\,\forall\,i\\
    \label{SPPKKT15}\sum_jB_{ij}(\hat{\lambda}^*_{2,i}(w)-\hat{\lambda}^*_{2,j}(w)+\hat{\gamma}^*_{2,ij}(w)-\hat{\gamma}^*_{2,ji}(w))&=0, \,\,\forall\,i,w\\
    \label{SPPKKT16}\hat{\gamma}^*_{1,ij}\left(B_{ij}(\hat{\theta}^*_{1,i}-\hat{\theta}^*_{1,j})-f^{\max}_{ij}\right)&=0, \,\,\forall\,i,j\\
    \label{SPPKKT17}\hat{\gamma}^*_{2,ij}(w)\left(B_{ij}(\hat{\theta}^*_{2,i}(w)-\hat{\theta}^*_{2,j}(w))-f^{\max}_{ij}\right)&=0, \,\,\forall\,i,j,w\\
    \label{SPPKKT18}\hat{\mu}^*_{2,k}(w)&\geq 0 , \quad\forall\,i,k\in\mathcal{L}_i,w\\
\label{SPPKKT19}\hat{\gamma}^*_{1,ij}\geq 0,\hat{\gamma}^*_{2,ij}(w)&\geq 0\quad \forall\,i,j,w
\end{align}

Note that, due to Assumption 1, the optimal solution to (SPP-P) is unique when it exists. We show here that this solution also gives optimal solutions to $(\text{GEN-P}_{i,k})$ and $(\text{LSE-P}_{i,k})$ for all $(i,k)$. 

Aside from the nonnegativity constraints given in (\ref{gen_nonneg}), the optimal solution to $(\text{GEN-P}_{i,k})$ satisfies 
\begin{align}
\label{GENPKKT1}c'_{1,k}(y^{G*}_{1,k})-P_{1,i}&\geq 0\\
\label{GENPKKT2}y^{G*}_{1,k}(c'_{1,k}(y^{G*}_{1,k})-P_{1,i})&=0\\
\label{GENPKKT3}c'_{2,k}(y^{G*}_{2,k}(w)) - P_{2,i}(w)&\geq0, \quad\forall\,w\\
\label{GENPKKT4}y^{G*}_{2,k}(w)(c'_{2,k}(y^{G*}_{2,k}(w))-P_{2,i}(w))&=0, \quad\forall\,w.
\end{align}
In addition to the feasibility constraints (\ref{LSEP_const1})-(\ref{LSEP_const2}), the optimal solution for $(\text{LSE-P}_{i,k})$ satisfies 
\begin{align}
\label{LSEPKKT1}P_{1,i} - \sum_w\mu^{L*}_{2,i}(w)p_w&\geq 0\\
\label{LSEPKKT2}y^{L*}_{1,i}\left(P_{i,1} - \sum_w\mu^{L*}_{2,k}(w)p_w\right)&=0\\
\label{LSEPKKT3}P_{2,i}(w) - \mu^{L*}_{2,k}(w)&\geq 0, \quad\forall\,w\\
\label{LSEPKKT4}y^{L*}_{2,k}(w)\left(P_{2,i}(w) - \mu^{L*}_{2,k}(w)\right)&= 0, \quad\forall\,w\\
\label{LSEPKKT5}c'_{\text{dr},k}(x^{L*}_{2,k}(w)) - \mu^{L*}_{2,k}(w)&\geq 0,\quad \forall\,w\\
\label{LSEPKKT6}x^{L*}_{2,k}(w)\left(c'_{\text{dr},k}(x^{L*}_{2,k}(w)) - \mu^{L*}_{2,k}(w)\right)&=0, \quad\forall\,w\\
\label{LSEPKKT7}c'_{\text{bo},k}(z^{L*}_{2,k}(w)) - \mu^{L*}_{2,k}(w)&\geq 0,\quad \forall\,w\\
\label{LSEPKKT8}z^{L*}_{2,k}(w)\left(c'_{\text{bo},k}(z^{L*}_{2,k}(w)) - \mu^{L*}_{2,k}(w)\right)&=0, \quad\forall\,w\\
\label{LSEPKKT9}mu^{L*}_{2,k}(w)(D_k-w_k - y^{L*}_{1,k}- y^{L*}_{2,k}(w) - x^{L*}_{2,k}(w) - z^{L*}_{2,k}(w)))&=0, \,\,\forall\,w\\
\label{LSEPKKT10}\mu^{L*}_{2,k}(w)&\geq 0, \,\,\forall\,w,
\end{align}
where $\mu^{L*}_{2,k}(\cdot)$ is the optimal dual vector corresponding to constraint (\ref{LSEP_const1}) in $(\text{LSE-P}_{i,k})$. Now, we define candidate prices 
\begin{equation}
    \label{pricedef}P_{1,i} = \hat{\lambda}^*_{1,i}\quad \text{and}\quad P_{2,i}(w) = \hat{\lambda}^*_{2,i}(w), \quad\forall\,i,w,
\end{equation}
and claim the following. 
\begin{claim}
    $(\hat{y}^{G*}_1,\hat{y}^{L*}_1,\hat{y}^{G*}_2(\cdot),\hat{y}^{L*}_2(\cdot),\hat{x}^{L*}_2(\cdot),P_1,P_2(\cdot))$ is an SCEq, where $P_1$ and $P_2(\cdot)$ are defined in (\ref{pricedef}), and 
$(\hat{y}^{G*}_1,\hat{y}^{L*}_1,\hat{y}^{G*}_2(\cdot),\hat{y}^{L*}_2(\cdot),\hat{x}^{L*}_2(\cdot))$ is the unique solution to (SPP-P). 
\end{claim}
\begin{proof}
Starting with $(\text{GEN-P}_{i,k})$, substituting for $P_{1,i}$ and $P_{2,i}(w)$, and selecting $y^G_{1,k} = \hat{y}^{G*}_{1,k}$ and $y^G_{2,k}(w) = \hat{y}^{G*}_{2,k}(w)$ for all $w$ in (\ref{GENPKKT1})-(\ref{GENPKKT4}) yields expressions identical to (\ref{SPPKKT1})-(\ref{SPPKKT4}). This shows that, given $P_1$ and $P_2(\cdot)$ as defined in (\ref{pricedef}), $(\hat{y}^{G*}_{1,k},\hat{y}^{G*}_{2,k}(\cdot))$ is optimal for ($\text{GEN-P}_{i,k}$), and therefore $\hat{y}^{G*}_{1,k}$ is optimal for ($\text{GEN1}_{i,k}$) and $\hat{y}^{G*}_{2,k}(\cdot)$ for $(\text{GEN2}_{i,k})$. 

Continuing to the LSEs' problems, substituting for $P_1$ and $P_2(w)$ in (\ref{LSEPKKT1})-(\ref{LSEPKKT10}), and selecting $y^{L}_{1,k}=\hat{y}^{L*}_{1,k}$, $y^{L}_{2,k}(w)=\hat{y}^{L*}_{2,k}(w)$, $x^L_{2,k}(w)=\hat{x}^{L*}_{2,k}(w)$, $z^L_{2,k}(w) = \hat{z}^{L*}_{2,k}(w)$ for all $w$ yields expressions which are the same as (\ref{SPPKKT5})-(\ref{SPPKKT13}) and (\ref{SPPKKT18}), except that $\mu^{L*}_{2,i}(w)$ appears instead of $\hat{\mu}^*_{2,k}(w)$. Therefore, setting $\mu^{L*}_{2,k}(w)=\hat{\mu}^*_{2,k}(w)$ for all $w$ makes (\ref{LSEPKKT1})-(\ref{LSEPKKT10}) identical to (\ref{SPPKKT5})-(\ref{SPPKKT13}) and (\ref{SPPKKT18}), showing that $(\hat{y}^{L*}_{1,k},\hat{y}^{L*}_{2,k}(\cdot),\hat{x}^{L*}_{2,k}(\cdot),\hat{z}^{L*}_{2,k}(\cdot))$ from the (SPP-P) optimal solution gives an optimal solution for $(\text{LSE-P}_{i,k})$. Thus, given prices $P_1$ and $P_2(\cdot)$ as defined in (\ref{pricedef}), $\hat{y}^{L*}_{1,k}$ is optimal for $(\text{LSE1}_{i,k})$, and $(\hat{y}^{L*}_{2,k}(w),\hat{x}^{L*}_{2,k}(w),\hat{z}^{L*}_{2,k}(w))$ is optimal for $(\text{LSE2}_{i,k})$, for all $w$, given $\hat{y}^{L*}_{1,k}$. \end{proof}

The market clearing condition is satisfied due to feasibility constraints (\ref{SPP_powerbal1}) and (\ref{SPP_powerbal2}). Therefore, the tuple in the claim is a sequential competitive equilibrium, and we have proven Theorem 1. 
\end{proof}

\subsection{Social Welfare Theorems}\label{sec:welfare}

There exists an important connection between competitive equilibria and efficient allocations, described by the two fundamental theorems of welfare economics. Here we give statements of first and second theorems of welfare economics for our two-stage market setting. If an allocation is included in an SCEq, we say that the equilibrium \emph{supports} the allocation. 
\begin{theorem}
\emph{(i)} Every sequential competitive equilibrium supports an efficient sequential allocation. \emph{(ii)} Conversely, an efficient sequential allocation can be supported by a sequential competitive equilibrium. 
\end{theorem}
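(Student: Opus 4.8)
The plan is to prove the two directions separately, routing both through the single-stage reformulation (SPP-P). Under Assumption~1, (SPP-P) is a convex program with a strictly convex objective, so its KKT system (\ref{SPPKKT1})--(\ref{SPPKKT19}) is both necessary and sufficient for optimality and its primal optimizer is unique; these two facts carry the argument.

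Part \emph{(ii)} is almost immediate from Theorem~\ref{compeq_existence}. An efficient sequential allocation is, by definition together with the equivalence of Lemma~1, precisely the primal optimum $(\hat y^{G*}_1,\hat y^{L*}_1,\hat y^{G*}_2(\cdot),\hat y^{L*}_2(\cdot),\hat x^{L*}_2(\cdot))$ of (SPP-P); Theorem~\ref{compeq_existence} already supplies dual optimal variables $\hat\lambda^*_1,\hat\lambda^*_2(\cdot)$ such that this allocation, paired with the prices $P_{1,i}=\hat\lambda^*_{1,i}$ and $P_{2,i}(w)=\hat\lambda^*_{2,i}(w)$, is an SCEq. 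That SCEq supports the allocation, which is the claim; I would simply point to the construction in the proof of Theorem~\ref{compeq_existence}.

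Part \emph{(i)} is the substantive half. Take an arbitrary SCEq $(\bar y^{G*}_1,\bar y^{L*}_1,\bar y^{G*}_2(\cdot),\bar y^{L*}_2(\cdot),\bar x^{L*}_2(\cdot),P^*_1,P^*_2(\cdot))$ and show its allocation solves (SPP-P). First, using the same stagewise aggregation that produced $(\text{GEN-P}_{i,k})$ and $(\text{LSE-P}_{i,k})$, lift the per-stage optimality of each agent's decisions (together with the induced shortfall quantities $\bar z^{L*}_{2,k}(\cdot)$) to optimality for the combined problems $(\text{GEN-P}_{i,k})$ and $(\text{LSE-P}_{i,k})$. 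Their KKT conditions are (\ref{GENPKKT1})--(\ref{GENPKKT4}) and (\ref{LSEPKKT1})--(\ref{LSEPKKT10}), with $\mu^{L*}_{2,k}(\cdot)$ the multiplier on the residual-demand constraint; setting $\hat\lambda^*_{1,i}:=P^*_{1,i}$, $\hat\lambda^*_{2,i}(w):=P^*_{2,i}(w)$ and $\hat\mu^*_{2,k}(w):=\mu^{L*}_{2,k}(w)$ then reproduces conditions (\ref{SPPKKT1})--(\ref{SPPKKT13}) and (\ref{SPPKKT18}) of (SPP-P) verbatim --- exactly the matching in Claim~1, read in reverse. It then remains to recover phase angles $\hat\theta_1,\hat\theta_2(\cdot)$ and nonnegative line-limit multipliers $\hat\gamma^*_{1,ij},\hat\gamma^*_{2,ij}(w)$ certifying the network stationarity and complementary-slackness conditions (\ref{SPPKKT14})--(\ref{SPPKKT17}) and (\ref{SPPKKT19}): I would use network feasibility of the cleared dispatch (the ISO's flow-balance and capacity constraints) to fix admissible $\hat\theta$'s up to a reference bus, then treat (\ref{SPPKKT14})--(\ref{SPPKKT15}) as a linear system in the $\hat\gamma$'s and solve for multipliers supported only on congested lines, checking signs and complementarity. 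Once the assembled tuple satisfies the full KKT system of the convex program (SPP-P), the SCEq allocation is optimal for (SPP-P) --- and, by uniqueness, is the only allocation any SCEq can carry --- so, unwinding Lemma~1, it is an optimal pair for (SPP1)-(SPP2), i.e.\ an efficient sequential allocation.

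The step I expect to be the main obstacle is this last reconstruction of the congestion data in part \emph{(i)}: phase angles and line-limit multipliers are not components of the SCEq tuple, so one must argue that the equilibrium nodal prices are compatible with \emph{some} nonnegative multipliers via (\ref{SPPKKT14})--(\ref{SPPKKT15}) --- intuitively, that nodal price spreads are never larger than the congested lines can account for. I would handle this by restricting to network-feasible equilibrium dispatches (so that $\hat\theta$ exists) and by exploiting the complementary-slackness structure already present in the agents' KKT conditions; every remaining step is a routine substitution mirroring the proof of Theorem~\ref{compeq_existence}.
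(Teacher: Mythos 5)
Your part \emph{(ii)} is exactly the paper's argument: it is a one-line appeal to the constructive proof of Theorem~\ref{compeq_existence}. Your part \emph{(i)} also starts the same way the paper does --- identify the equilibrium prices with the candidate multipliers $\hat\lambda^*_1,\hat\lambda^*_2(\cdot)$, set $\hat\mu^*_{2,k}(w)=\mu^{L*}_{2,k}(w)$, and recover (\ref{SPPKKT1})--(\ref{SPPKKT13}) and (\ref{SPPKKT18}) from the agents' KKT systems. The divergence, and the genuine gap, is precisely the step you flag yourself: the recovery of phase angles and nonnegative line-limit multipliers satisfying (\ref{SPPKKT14})--(\ref{SPPKKT17}) and (\ref{SPPKKT19}). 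Definition~1 only imposes \emph{aggregate} market clearing in each stage and scenario; it contains no network information at all. From that alone, the equilibrium dispatch need not even be network feasible, and more importantly nothing ties the nodal prices at different buses together: one can have $P^*_{1,i}\neq P^*_{1,j}$ across an uncongested line, in which case your proposed linear system in the $\hat\gamma$'s, restricted to congested lines with sign constraints, has no solution. The agents' problems $(\text{GEN-P}_{i,k})$ and $(\text{LSE-P}_{i,k})$ cannot supply this price-consistency condition, because each agent only sees its own nodal price; so ``exploiting the complementary-slackness structure already present in the agents' KKT conditions'' cannot close this step.

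The paper closes it by introducing a third optimizing agent: a price-taking ISO that maximizes merchandising surplus subject to the DC flow-balance constraints (\ref{ISO_powerbal1})--(\ref{ISO_powerbal2}) and line limits (\ref{ISO_powercap1})--(\ref{ISO_powercap2}), and by requiring the equilibrium quantities to solve this (ISO) problem at the equilibrium prices. The (ISO) KKT conditions force $\tilde\lambda^*_{1,i}=P_{1,i}$ and $\tilde\lambda^*_{2,i}(w)=P_{2,i}(w)$, and its remaining stationarity and complementary-slackness conditions are exactly (\ref{SPPKKT14})--(\ref{SPPKKT17}), so the ISO's optimal $\tilde\theta^*$ and $\tilde\gamma^*\geq 0$ are the missing pieces you were trying to reconstruct by hand; summing the (ISO), $(\text{GEN-P}_{i,k})$ and $(\text{LSE-P}_{i,k})$ problems recovers (SPP-P), which is what makes the assembled KKT certificate legitimate. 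In short, your proof needs either this ISO-optimality requirement (as the paper imposes) or an explicit added hypothesis that the equilibrium prices and a network-feasible dispatch jointly satisfy a no-arbitrage condition of the form (\ref{SPPKKT14})--(\ref{SPPKKT15}) with nonnegative multipliers complementary to congestion; without one of these, the reconstruction step can fail and the argument does not go through.
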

\begin{proof}
    To prove statement \emph{(i)}, per Definition 1, under a sequential competitive equilibrium, the market clears both in the DA and RT stage, in all scenarios. This condition is equivalent to posing the following ISO problem \cite{wang2012dynamic}:
\begin{align}
(\text{ISO})\max_{\substack{\tilde{y}^G_1,\tilde{y}^G_2,\tilde{y}^L_1,\tilde{y}^L_2\\\tilde{\theta}_1,\tilde{\theta}_2}}&\,\,\sum_iP_{1,i}\left(\sum_{k\in\mathcal{L}_i}\tilde{y}^L_{1,k}-\sum_{k\in\mathcal{G}_i}\tilde{y}^G_{1,k}\right)+\sum_w\left(\sum_iP_{2,i}(w)\left(\sum_{k\in\mathcal{L}_i}\tilde{y}^L_{2,k}(w)-\tilde{y}^G_{2,k}(w)\right)\right)p_w\\
\label{ISO_powerbal1}\text{s.t.}&\quad \sum_{k\in\mathcal{G}_i}\tilde{y}^G_{1,k} - \sum_{k\in\mathcal{L}_i}\tilde{y}^L_{1,k} = \sum_jB_{ij}(\tilde{\theta}_{1,i}-\tilde{\theta}_{1,j}), \,\,\forall\,i\\
\label{ISO_powerbal2}&\quad \sum_{k\in\mathcal{G}_i}\tilde{y}^G_{2,k}(w) - \sum_{k\in\mathcal{L}_i}\tilde{y}^L_{2,k}(w) =\sum_jB_{ij}(\tilde{\theta}_{2,i}(w) - \tilde{\theta}_{2,j}(w) - \tilde{\theta}_{1,i} + \tilde{\theta}_{1,j}), \,\,\forall\,i,w\\
\label{ISO_powercap1}&\quad B_{ij}(\tilde{\theta}_{1,i}-\tilde{\theta}_{1,j})\leq f^{\max}_{ij}, \quad\forall\,i,j\\
\label{ISO_powercap2}&\quad B_{ij}(\tilde{\theta}_{2,i}(w)-\tilde{\theta}_{2,j}(w))\leq f^{\max}_{ij}, \quad\forall\,i,j,w\\
&\quad \tilde{y}^G_{1,k}\geq0,\,\tilde{y}^G_{2,k}(w)\geq 0, \quad\forall\,i,k\in\mathcal{G}_i,w\\
&\quad \tilde{y}^L_{1,k}\geq0,\,\tilde{y}^L_{2,k}(w)\geq 0, \quad\forall\,i,k\in\mathcal{L}_i,w,
\end{align}
and then requiring that $(\overline{y}^{G*}_1,\overline{y}^{L*}_1,\overline{y}^{G*}_2(\cdot),\overline{y}^{L*}_2(\cdot))$ as given in the SCEq definition also solves (ISO). 

Summing the objectives of all agents, i.e.,  $(\text{GEN-P}_{i,k})$ and $(\text{LSE-P}_{i,k})$ over $(i,k)$, along with the objective of (ISO) recovers the objective of (SPP-P). Similarly, collecting the constraints from individual $(\text{GEN-P}_{i,k})$ and $(\text{LSE-P}_{i,k})$ problems, along with those from (ISO) recovers the full set of constraints found in (SPP-P). Together, therefore, (ISO), along with all $(\text{GEN-P}_{i,k})$ and $(\text{LSE-P}_{i,k})$ represent a decomposition of (SPP-P). 

Denote the Lagrange multipliers corresponding to constraints (\ref{ISO_powerbal1}) and (\ref{ISO_powerbal2}) as $\tilde{\lambda}_1$ and $\tilde{\lambda}_2$. Note that the KKT conditions corresponding to $\tilde{y}^G_{1,k}$ and $\tilde{y}^L_{1,k}$ are 
\begin{align}
    P_{1,i} - \tilde{\lambda}^*_{1,i}&\geq 0, \quad\forall\,i,k\in\mathcal{G}_i\\
    \tilde{y}^{G*}_{1,k}(P_{1,i} - \tilde{\lambda}^*_{1,i})&= 0, \quad\forall\,i,\,k\in\mathcal{G}_i\\
    \tilde{\lambda}^*_{1,i} - P_{1,i}&\geq 0, \quad\forall\,i,k\in\mathcal{G}_i\\
    \tilde{y}^{L*}_{1,k}( \tilde{\lambda}^*_{1,i} - P_{1,i})&= 0, \quad\forall\,i,\,k\in\mathcal{G}_i
\end{align}
This implies that we can take $\tilde{\lambda}^*_{1,i}=P_{1,i}$ for all $i$, and it can similarly be shown that $\tilde{\lambda}^*_{2,i}(w)=P_{2,i}(w)$ for all $i$ and $w$. The remaining KKT conditions for (ISO) are identical in form to (SPP-P) KKT conditions (\ref{SPPKKT14})-(\ref{SPPKKT17}). Therefore, associating Lagrange multipliers $\tilde{\gamma}^*_1$ and $\tilde{\gamma}^*_2(\cdot)$ with (ISO) constraints (\ref{ISO_powercap1}) and (\ref{ISO_powercap2}), together with  $(P_1,P_2(\cdot),\tilde{\gamma}^*_1,\tilde{\gamma}^*_2(\cdot),\tilde{\theta}^*_1,\tilde{\theta}^*_2(\cdot))$ satisfy (\ref{SPPKKT14})-(\ref{SPPKKT17}), and therefore give a candidate $(\hat{\lambda}^*_1,\tilde{\lambda}^*_2(\cdot),\tilde{\gamma}^*_1,\tilde{\gamma}^*_2(\cdot),\hat{\theta}^*_1,\hat{\theta}^*_2(\cdot))$ in (SPP-P). Taking $\mu^{L*}_{2,k}(w)=\hat{\mu}^*_{2,k}(w)$ for all LSEs as in the proof of Theorem \ref{compeq_existence} and choosing the primal allocation quantities in (SPP-P) as the equilibrium quantities  (together with implied shortfall decisions $z^{L*}_{2,k}(w)$ for all LSEs in all scenarios), the (SPP-P) KKT conditions are satisfied. Therefore, the sequential competitive equilibrium supports an efficient allocation. 

The proof of the second statement follows directly from the constructive proof of Theorem \ref{compeq_existence}. 
\end{proof}

\section{Two-Stage Network Mechanism for Electricity Market with Renewable Generation}\label{sec:mechanism}

We showed in the proof of Theorem \ref{compeq_existence} that SCEq prices arise from the dual solution to (SPP-P). If we make the additional assumptions that all market participant cost functions can be finitely parametrized (for example taking quadratic form), and further that all market participants are non-strategic, the following market mechanism implements the SCEq, and clears the market following the RT stage:
\begin{enumerate}
\item Each generator and LSE $(i,k)$ submits parameters $(\xi_{1,k},\xi_{2,k})$ and $(\xi_{\text{dr},k},\xi_{\text{bo},k})$, respectively. 
\item The ISO solves (SPP-P), and announces DA prices $P^*_1=\hat{\lambda}^*_1$, along with RT price schedule $P^*_2(\cdot)=\hat{\lambda}^*_2(\cdot)$. 
\item Generator $(i,k)$ solves $(\text{GEN1}_{1,k})$ and LSE $(i,k)$ solves $(\text{LSE1}_{1,k})$. LSE $(i,k)$ pays $P^*_{1,i}\overline{y}^{L*}_{1,k}$ and generator $(i,k)$ receives $P^*_{1,i}\overline{y}^{G*}_{1,k}$. 
\item At the start of the RT stage, the renewable generation output $W=w$ is observed by both the generators and LSEs. Generator $(i,k)$ solves $(\text{GEN2}_{i,k})$, and LSE $(i,k)$ solves ($\text{LSE2}_{i,k}$). LSE $(i,k)$ pays $P^*_{2,i}(w)\overline{y}^{L*}_{2,k}(w)$, and generator $(i,k)$ receives $P^*_{2,i}(w)\overline{y}^{G*}_{2,k}(w)$. 
\item Generator $(i,k)$ produces $\overline{y}^{G*}_{1,k}+\overline{y}^{G*}_{2,k}(w)$, and LSE receives $\overline{y}^{L*}_{1,k} + \overline{y}^{L*}_{2,k}(w)$. 
\end{enumerate}

\section{Dynamic Economic Dispatch Game and Efficient Bids}\label{sec:game}

Previous sections assumed that the ISO has full knowledge of the cost functions associated with each generator and LSE. In this section, we relax that assumption, instead allowing market participants to report information related to their respective costs to the ISO, which it then uses to make dispatch decisions. Additionally, we allow that all entities may behave strategically, so that the submitted information may not reflect their true costs. In practice, the bid formats typically implemented in power markets are not expressive enough to capture the strictly convex cost functions we posed as in previous sections. For example, the California ISO uses 10-segment piecewise linear bids for supply-side bids \cite{caisobidformat}. Therefore, in this section we reformulate (SPP-P) as a dynamic economic dispatch (DED) game and study the outcomes of that game. 
\subsection{LSE Utility Functions and SPP-P Reformulation}

In the following development, we will assume that generators and LSEs submit linear bids for the cost of energy production and value of energy consumption, respectively, for both the first and second stages of the market. For the RT market, both types of agents are allowed to submit bids corresponding to each scenario $W=w$. 

Given the objective of (SPP-P), it would be natural to allow the LSEs to submit bids on demand response and blackout costs. However, in order to provide sufficient conditions for the existence of Nash equilibria for our market in later sections, it is necessary instead to work with equivalent LSE valuation functions, i.e.,  functions giving the benefit LSEs derive from consuming quantities of primary and ancillary energy. 

In essence, this is due to the fact that under our market formulation, the ISO can only allocate demand response or planned blackouts to a given LSE $(i,k)$'s consumer population via the offerings of LSE $(i,k)$ itself. This is true even if multiple LSEs exist at a given node, or if multiple LSEs exist in a network with no congestion. In contrast, to the extent allowed by the network transmission constraints, the planner can route the least expensive electricity to serve loads. As we will show in a later example, if LSEs directly bid on their own service costs, it creates opportunities to increase their payoffs (decrease overall operating costs). On the other hand, when LSEs provide bids on their valuation for electricity in both stages, they compete directly with one another for the same service, and the planner is able to allocate flows to the highest bidders.

To start, consider the following problem, given $y^L_{1,k}$ and $y^L_{2,k}(w)$ for all scenarios $W=w$
\begin{align}
\nonumber(\text{LSE}_{i,k}(y^L_{1,k},y^L_{2,k}))\,\, \min_{x_{2,k},z_{2,k}}&\quad \sum_w(c_{\text{dr},k}(x_{2,k}(w))+ c_{\text{bo},k}(z_{2,k}(w)))p_w\\
\nonumber\text{s.t.}&\quad y^L_{1,k} + y^L_{2,k}(w) + x_{2,k}(w) + z_{2,k}(w)\geq D_k-w_k, \,\,\forall\,w.
\end{align}
This problem can be decomposed into $|\mathcal{W}|$ convex problems $(\text{LSE}_{i,k}(y^L_{1,k},y^L_{2,k}(w),w)$, each corresponding to a single scenario $w$, where $|\mathcal{W}|$ gives the total number of possible second stage scenarios. Problem $(\text{LSE}_{i,k}(y^L_{1,k},y^L_{2,k}))$ has KKT conditions
\begin{align}
    \label{LSEutilKKT1}c'_{\text{dr},k}(x^*_{2,k}(w)) - \mu^*_{2,k}(w)&\geq 0, \quad\forall\,w\\
    \label{LSEutilKKT2}x^*_{2,k}(w)\left(c'_{\text{dr},k}(x^*_{2,k}(w)) - \mu^*_{2,k}(w)\right)&=0, \quad\forall\,w\\
    \label{LSEutilKKT3}c'_{\text{bo},k}(z^*_{2,k}(w)) - \mu^*_{2,k}(w)&\geq 0, \quad\forall\,w\\
    \label{LSEutilKKT4}z^*_{2,k}(w)\left(c'_{\text{bo},k}(z^*_{2,k}(w)) - \mu^*_{2,k}(w)\right)&=0, \quad\forall\,w\\
    \label{LSEutilKKT5}\mu^*_{2,k}(w)(D_k - w_k - y^L_{1,k} - y^L_{2,k}(w)-x^*_{2,k}(w) - z^*_{2,k}(w))&=0, \,\,\forall\,w\\
    \label{LSEutilKKT6}\mu^*_{2,k}(w)&\geq 0, \,\,\forall\,w. 
\end{align}
Let $u_{i,k}(y^L_{1,k},y^L_{2,k})$ denote the negation of the optimal value of $(\text{LSE}_{i,k}(y^L_{1,k},y^L_{2,k}))$, and $u_{i,k}(y_{1,k},y_{2,k}(w),w)$ denote the negation of the optimal value of $(\text{LSE}_{i,k}(y^L_{1,k},y^L_{2,k}),w)$ for each $w$ (without scaling by $p_w$). Thus these functions give the benefit of consuming electricity acquired in the first and second stages of the market in terms of the associated negative costs of demand response and scheduled blackouts. Note that the definitions of $u_{i,k}(y^L_{1,k},y^L_{2,k})$ and $u_{i,k}(y^L_{1,k},y^L_{2,k}(w),w)$ imply the following equality:
$$u_{i,k}(y^L_{1,k},y^L_{2,k}) = \sum_wu_{i,k}(y^L_{1,k},y^L_{2,k}(w),w)p_w.$$
As each $(\text{LSE}_{i,k}(y^L_{1,k},y^L_{2,k},w))$ is a convex problem, its optimal value function is convex in both $y^L_{1,k}$ and $y^L_{2,k}(w)$, implying that $u_{i,k}(\cdot,\cdot,w)$ is convex for all $w$ in both arguments \cite{boyd2004convex}. Since (\ref{LSEutilKKT1})-(\ref{LSEutilKKT6}) is a subset of the (SPP-P) KKT conditions, we can use the specified utility functions to rewrite (SPP-P) without explicit reference to $x^L_{2,k}(w)$ and $z^L_{2,k}(w)$:
\begin{align}
(\text{SPP-U})\max_{\substack{\hat{y}^G_1,\hat{y}^G_2,\hat{y}^L_1,\hat{y}^L_2\\\hat{\theta}_1,\hat{\theta}_2}}&\quad\sum_i\sum_{k\in\mathcal{L}_i}\sum_wu_{i,k}(y^L_{1,k},y^L_{2,k}(w))p_w-\sum_i\sum_{k\in\mathcal{G}_i}\left(c_{1,k}(\hat{y}^G_{1,k}) +\sum_{w}c_{2,k}(\hat{y}^G_{2,k}(w)p_w\right)\\
\label{SPPU_powerbal1}\text{s.t.}&\quad \sum_{k\in\mathcal{G}_i}\hat{y}^G_{1,k} - \sum_{k\in\mathcal{L}_i}\hat{y}^L_{1,k} = \sum_jB_{ij}(\hat{\theta}_{1,i}-\hat{\theta}_{1,j}), \,\,\forall\,i\\
\label{SPPU_powerbal2}&\quad \sum_{k\in\mathcal{G}_i}\hat{y}^G_{2,k}(w) - \sum_{k\in\mathcal{L}_i}\hat{y}^L_{2,k}(w)= \sum_jB_{ij}(\hat{\theta}_{2,i}(w)-\hat{\theta}_{2,j}(w)-\hat{\theta}_{1,i}+\hat{\theta}_{1,j}), \,\,\forall\,i,w\\
&\label{SPPU_branchflow1}\quad B_{ij}(\hat{\theta}_{1,i} - \hat{\theta}_{1,j})\leq f^{\max}_{ij}, \quad\forall\,i,j\\
&\label{SPPU_branchflow2}\quad B_{ij}(\hat{\theta}_{2,i}(w) - \hat{\theta}_{2,j}(w))\leq f^{\max}_{ij}, \quad\forall\,i,j\\
&\quad \hat{y}^L_{1,k}\geq 0,\hat{y}^L_{2,k}(w)\geq0, \quad\forall\,i,k\in\mathcal{L}_i,w\\
\label{SPPU_const_end}&\quad \hat{y}^G_{1,k}\geq0,\hat{y}^L_{2,k}(w)\geq 0, \quad\forall\,i,k\in\mathcal{G}_i,w
\end{align}
While $u_{i,k}(\cdot,\cdot,w)$ may not be differentiable under our original cost function assumptions in all cases, we can still make use of the subdifferential versions of the KKT conditions to describe optimal solutions to (SPP-U). 
\subsection{DED Game and Efficient Bids}
In the DED, each generator $(i,k)$ bids $b^G_{1,k}(\cdot)$ and $b^G_{2,k}(\cdot,w)$ for each scenario, representing its reported cost for producing electricity in both stages. Similarly, the LSEs bid $b^L_{1,k}(\cdot)$ and $b^L_{2,k}(\cdot,w)$ for each scenario, giving their utility for consuming primary and ancillary generation. Here it is assumed that all participants submit nonnegative scalar bids in each case, specifying cost and utility functions, e.g. $b^G_{1,k}(x) = b^G_{1,k}x$ for all $x\geq 0$. 

Given the generator and LSE bids, the ISO solves the following problem:
\begin{align}
\nonumber(\text{DED})\,\max_{\substack{\hat{y}^G_1,\hat{y}^G_2,\hat{y}^L_1,\hat{y}^L_2\\\hat{\theta}_1,\hat{\theta}_2}}&\,\sum_i\sum_{k\in\mathcal{L}_i}b^L_{1,k}\hat{y}^L_{1,k}+ \sum_i\sum_{k\in\mathcal{L}_i}\sum_wb^L_{2,k}(w)\hat{y}^L_{2,k}(w))p_w\\
&\hspace{0.3in}-\sum_i\sum_{k\in\mathcal{G}_i}\left(b^G_{1,k}\hat{y}^G_{1,k} +\sum_{w}b^G_{2,k}(w)\hat{y}^G_{2,k}(w)p_w\right)\\
\label{DED_powerbal1}\text{s.t.}&\quad \sum_{k\in\mathcal{G}_i}\hat{y}^G_{1,k} - \sum_{k\in\mathcal{L}_i}\hat{y}^L_{1,k} = \sum_jB_{ij}(\hat{\theta}_{1,i}-\hat{\theta}_{1,j}), \quad\forall\,i\\
\label{DED_powerbal2}&\quad \sum_{k\in\mathcal{G}_i}\hat{y}^G_{2,k}(w) - \sum_{k\in\mathcal{L}_i}\hat{y}^L_{2,k}(w)= \sum_jB_{ij}(\hat{\theta}_{2,i}(w)-\hat{\theta}_{2,j}(w)-\hat{\theta}_{1,i}+\hat{\theta}_{1,j}), \,\,\forall\,i,w\\
&\quad B_{ij}(\hat{\theta}_{1,i} - \hat{\theta}_{1,j})\leq f^{\max}_{ij}, \quad\forall\,i,j\\
&\quad B_{ij}(\hat{\theta}_{2,i}(w) - \hat{\theta}_{2,j}(w))\leq f^{\max}_{ij}, \quad\forall\,i,j\\
    &\quad \hat{y}^L_{1,k}\geq 0,\hat{y}^L_{2,k}(w)\geq0, \quad\forall\,i,k\in\mathcal{L}_i,w\\
    \label{DED_const_end}&\quad \hat{y}^G_{1,k}\geq0,\hat{y}^L_{2,k}(w)\geq 0, \quad\forall\,i,k\in\mathcal{G}_i,w
\end{align}
First, we show by construction that there exists a bid profile $b = (b^G_{1,k},b^G_{2,k},b^L_{1,k},b^L_{2,k})$ for (DED) that induces an allocation which is efficient, i.e.,  optimal for (SPP-P). 
\begin{proposition}\label{efficient_bid_prop}
For each generator $(i,k)$, let 
\begin{equation}
\label{efficient_bid}
b^G_{1,k} = b^L_{1,k} = \lambda^*_{1,i},\quad b^G_{2,k}(w) = b^L_{2,k}(w) = \lambda^*_{2,i}(w), \quad\forall\,i,w,\end{equation}
where $\lambda^*_{1,i}$ and $\lambda^*_{2,i}(w)$ give optimal Lagrange multipliers associated with (SPP-P) constraints (\ref{SPP_powerbal1}) and (\ref{SPP_powerbal2}) for all $i$ and $w$. Then, there exists a solution to (DED) that is also a solution to (SPP-P). 
\end{proposition}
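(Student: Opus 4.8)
The plan is to certify the efficient allocation as an optimal solution of the linear program (DED) by exhibiting a matching dual vector, exploiting the fact that (DED) shares its network constraints with (SPP-P). Let $(\hat y^{G*}_1,\hat y^{G*}_2(\cdot),\hat y^{L*}_1,\hat y^{L*}_2(\cdot),\hat x^{L*}_2(\cdot),\hat z^{L*}_2(\cdot),\hat\theta^*_1,\hat\theta^*_2(\cdot))$ be the (unique) primal optimum of (SPP-P), with a dual optimum $(\hat\lambda^*_1,\hat\lambda^*_2(\cdot),\hat\mu^*_2(\cdot),\hat\gamma^*_1,\hat\gamma^*_2(\cdot))$ satisfying (\ref{SPPKKT1})--(\ref{SPPKKT19}). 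Deleting the components $\hat x^{L*}_2(\cdot),\hat z^{L*}_2(\cdot)$ yields a tuple that is feasible for (DED), because the constraints (\ref{DED_powerbal1})--(\ref{DED_const_end}) of (DED) are precisely the power-balance, line-flow, and generation/consumption-nonnegativity constraints of (SPP-P) with the residual-demand inequalities (and the shortfall variables) removed; conversely every (DED)-feasible point extends to an (SPP-P)-feasible one, and the (DED) objective depends only on the shared variables, so it is enough to show that the restricted tuple is optimal for (DED).

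Since (DED) is a linear (hence convex) program, I would establish optimality via KKT sufficiency. Associate with (DED) the multipliers $\hat\lambda^*_{1,i}$ for (\ref{DED_powerbal1}), $\hat\lambda^*_{2,i}(w)$ for (\ref{DED_powerbal2}), $\hat\gamma^*_{1,ij}$ and $\hat\gamma^*_{2,ij}(w)$ for the two flow-cap families, and the zero multiplier for every nonnegativity constraint. With the bids fixed by (\ref{efficient_bid}), $b^G_{1,k}=b^L_{1,k}=\lambda^*_{1,i}$ and $b^G_{2,k}(w)=b^L_{2,k}(w)=\lambda^*_{2,i}(w)$, the (DED) stationarity condition in each of $\hat y^G_{1,k},\hat y^L_{1,k},\hat y^G_{2,k}(w),\hat y^L_{2,k}(w)$ collapses to the identity that the bid equals the corresponding nodal multiplier, which holds by construction; the attached nonnegativity multipliers are then zero and their complementarity conditions are automatic. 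Because the (DED) objective does not involve the phase angles and its network constraints coincide in form with (\ref{SPP_powerbal1})--(\ref{SPP_powerbal2}) and the (SPP-P) flow caps, the (DED) stationarity conditions in $\hat\theta_{1,i}$ and $\hat\theta_{2,i}(w)$ are exactly (\ref{SPPKKT14}) and (\ref{SPPKKT15}), the flow-cap complementarity conditions are exactly (\ref{SPPKKT16}) and (\ref{SPPKKT17}), and dual feasibility $\hat\gamma^*\geq 0$ is (\ref{SPPKKT19}); primal feasibility was noted above. Hence the efficient allocation together with these multipliers is a KKT point, so it solves (DED); re-attaching the (SPP-P)-optimal shortfall decisions $\hat x^{L*}_2(\cdot),\hat z^{L*}_2(\cdot)$ then recovers a full optimal solution of (SPP-P), which is the claim.

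The step I expect to require the most care is the bookkeeping of the scenario probabilities $p_w$: in (DED) the second-stage generation and consumption quantities enter the objective weighted by $p_w$ while they enter the power-balance constraints (\ref{DED_powerbal2}) unweighted, so one must verify that the multiplier identified with $\lambda^*_{2,i}(w)$ is normalized the same way as in the (SPP-P) KKT conditions (\ref{SPPKKT3})--(\ref{SPPKKT4}) and in the price definition, so that the second-stage stationarity identities close up exactly as the first-stage ones. A minor but worth-stating point is that, with linear bids, (DED) may admit several optimal allocations; the KKT-sufficiency argument only guarantees that the efficient allocation is one of them, which is precisely what the statement ``there exists a solution to (DED) that is also a solution to (SPP-P)'' asserts.
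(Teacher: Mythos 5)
Your proposal is correct and matches the paper's argument: both certify the efficient allocation as a (DED) optimum by reusing the (SPP-P) dual multipliers $(\hat\lambda^*,\hat\gamma^*)$ and the optimal angles, noting that with the bids of (\ref{efficient_bid}) the stationarity and complementarity conditions (\ref{DEDKKT1})--(\ref{DEDKKT8}) hold identically and the remaining conditions reduce to (\ref{SPPKKT14})--(\ref{SPPKKT17}), (\ref{SPPKKT19}). The only cosmetic difference is that the paper takes the allocation from the equivalent reformulation (SPP-U) rather than (SPP-P) directly, which changes nothing since the two share the same optimal generation, consumption, and angle variables.
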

\begin{proof}
In addition to feasibility, the KKT conditions for (DED) are:
\begin{align}
    \label{DEDKKT1}b^G_{1,k} - \hat{\lambda}^*_{1,i}&\geq 0, \quad\forall\,i,k\in\mathcal{G}_i\\
    \label{DEDKKT2}\hat{y}^{G*}_{1,k}\left(b^G_{1,k} - \hat{\lambda}^*_{1,i}\right)&=0, \quad\forall\,i,k\in\mathcal{G}_i\\
    \label{DEDKKT3}b^G_{2,k}(w) - \hat{\lambda}^*_{2,i}(w)&\geq 0, \quad\forall\,i,k\in\mathcal{G}_i\\
    \label{DEDKKT4}\hat{y}^{G*}_{2,k}(w)\left(b^G_{2,k}(w) - \hat{\lambda}^*_{2,i}(w)\right)&=0, \quad\forall\,i,k\in\mathcal{G}_i\\
    \label{DEDKKT5}\hat{\lambda}^*_{1,i} - b^L_{1,k}&\geq 0, \quad\forall\,i,k\in\mathcal{G}_i,w\\
    \label{DEDKKT6}\hat{y}^{L*}_{1,k}\left(\hat{\lambda}^*_{1,i} - b^L_{1,k}\right)&=0, \quad\forall\,i,k\in\mathcal{G}_i,w\\
    \label{DEDKKT7}\hat{\lambda}^*_{2,i}(w)-b^L_{2,k}(w)&\geq 0, \quad\forall\,i,k\in\mathcal{L}_i,w\\
    \label{DEDKKT8}\hat{y}^{L*}_{2,k}(w)\left(\hat{\lambda}^*_{2,i}(w)-b^L_{2,k}(w)\right)&=0, \quad\forall\,i,k\in\mathcal{L}_i,w\\
   \label{DEDKKT9}\sum_jB_{ij}\big(\hat{\lambda}^*_{1,i} - \hat{\lambda}^*_{1,j}-\sum_w\big(\hat{\lambda}^*_{2,i}(w)-\hat{\lambda}^*_{2,j}(w)\big)p_w+ \hat{\gamma}^*_{1,ij}-\hat{\gamma}^*_{1,ji}\big)&=0, \,\,\forall\,i\\
     \label{DEDKKT10} \sum_jB_{ij}\big(\hat{\lambda}^*_{2,i}(w) - \hat{\lambda}^*_{2,j}(w)
   + \hat{\gamma}^*_{2,ij}(w) - \hat{\gamma}^*_{2,ij}(w)\big)&=0, \,\,\forall\,i,w\\
    \label{DEDKKT11}\hat{\gamma}^*_{1,ij}\left(B_{ij}(\hat{\theta}^*_{1,i}-\hat{\theta}^*_{1,j}-f^{\max}_{ij}\right)&=0, \,\,\forall\,i,j\\
    \label{DEDKKT12}\hat{\gamma}^*_{2,ij}\left(B_{ij}(\hat{\theta}^*_{2,i}(w)-\hat{\theta}^*_{2,j}(w)-f^{\max}_{ij}\right)&=0, \,\,\forall\,i,j\\
    \label{DEDKKT13}\hat{\gamma}^*_{1,ij}\geq 0,\hat{\gamma}^*_{2,ij}(w)&\geq0, \,\,\forall\,i,j,w
\end{align}
Under the bid profile given in (\ref{efficient_bid}), the planner can select $\hat{\lambda}^*=\lambda^*$ to satisfy conditions (\ref{DEDKKT1})-(\ref{DEDKKT8}). Given this selection, the planner can select $(\hat{\theta}^*,\hat{\gamma}^*)=(\theta^*,\gamma^*)$ to satisfy the remaining conditions, and then select $\hat{y}^*=y^*$, where $(y^*,\theta^*,\gamma^*)$ are part of an optimal solution to (SPP-U). 
\end{proof}

\section{Sequential Nash Equilibria}\label{sec:sceq}

While the previous section demonstrated the existence of an efficient bid profile for the dynamic economic dispatch game, given that the market participants are free to bid any nonnegative scalar values, it remains to show whether generators or LSEs might find it in their own interest to deviate from such a profile. 

To begin, we define individual outcomes, i.e.,  the payoffs for the generators and LSEs under the DED with linear bids, assuming the locational marginal pricing (LMP) scheme \cite{morales2013integrating} is enforced by the ISO. For a given bid profile $b$, and the resulting DED solution allocation $(\hat{y}^{G*}_1(b),\hat{y}^{G*}_2(b),\hat{y}^{L*}_1(b),\hat{y}^{L*}_2(b))$, the expected payoff for generator $(i,k)$ is 
\begin{equation}\nonumber
\begin{split}
\mathbb{E}[\pi^G_{i,k}(b)] = \hat{\lambda}^*_{1,i}(b)\hat{y}^{G*}_{1,i}(b) - c_{1,k}(\hat{y}^{G*}_{1,i}(b))+ \sum_w\left(\hat{\lambda}^*_{2,i}(b,w)\hat{y}^{G*}_{2,i}(b,w) - c_{2,k}(\hat{y}^{G*}_{2,k}(b,w))\right)p_w. 
\end{split}
\end{equation}
The expected payoff for LSE $(i,k)$ is 
\begin{equation}\nonumber
\begin{split}
&\mathbb{E}[\pi^L_{i,k}(b)] = -\hat{\lambda}^*_{1,i}(b)\hat{y}^{L*}_{1,k}(b)+ \sum_wu_{i,k}(\hat{y}^{L*}_{1,k}(b),\hat{y}^{L*}_{2,k}(b,w),w)p_w-\sum_w\hat{\lambda}^*_{2,k}(b,w)\hat{y}^{L*}_{2,k}(b,w))p_w. 
\end{split}
\end{equation}
For a bid profile $b$, denote the collection of bids aside from $b_{i,k}$ as $b_{-(i,k)}$. Through the remainder of this section, we will omit the expectation notation $\mathbb{E}[\cdot]$ and denote the expected payoff of individual generators and LSEs, given bid profile $b$ as $\pi^G_{i,k}(b)$ and $\pi^L_{i,k}(b)$, respectively. 
\begin{definition}
A \emph{sequential Nash equilibrium} is a bid profile $b^*$ such that for all generators, it holds that 
\begin{equation}
\label{nash_eq_gen}
\pi^G_{i,k}(b^*)\geq \pi^G_{i,k}(b_{i,k},b_{-(i,k)})\quad \forall\,b_{i,k}\in\mathbb{R}_+\times\mathbb{R}^{|\mathcal{W}|}_+
\end{equation}
and for all LSEs it holds that 
\begin{equation}
\label{nash_eq_lse}
\pi^L_{i,k}(b^*)\geq \pi^L_{i,k}(b_{i,k},b_{-(i,k)})\quad \forall\,b_{i,k}\in\mathbb{R}_+\times\mathbb{R}^{|\mathcal{W}|}_+.
\end{equation}
\end{definition}
\subsection{Existence of Efficient Sequential Nash Equilibria}
This section explores two conditions under either of which an efficient Nash equilibrium exists for (DED). For both conditions, the efficient bid profile specified in the previous section coincides with such a Nash equilibrium. First, the following assumption is necessary to preclude the possibility that any single generator has the market power to ask for arbitrarily high prices in its bid.
\begin{assumption}
\label{exclude_gen}
The system problem (SPP-U) is feasible in the absence of any one of the generators in either market stage. 
\end{assumption}
The first sufficient condition is as follows. 
\begin{definition}
\emph{(Congestion Free Condition)} No branch power flow constraint (\ref{SPPU_branchflow1}) or (\ref{SPPU_branchflow2}) is binding in the optimal dispatch for (SPP-U). 
\end{definition}

With the previous section's existence result in hand, it is possible to show that the congestion-free condition guarantees the existence of an efficient Nash equilibrium. 
\begin{theorem}
Under Assumption \ref{exclude_gen} and the congestion-free condition, there exists an efficient Nash equilibrium for (DED). 
\end{theorem}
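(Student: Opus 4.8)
The plan is to show that the efficient bid profile $b^*$ of Proposition~\ref{efficient_bid_prop} is itself a sequential Nash equilibrium; its efficiency is then exactly Proposition~\ref{efficient_bid_prop}. First I would record that the congestion-free condition forces the dispatch prices to be spatially uniform: since no branch-flow constraint (\ref{SPPU_branchflow1})--(\ref{SPPU_branchflow2}) binds, complementary slackness kills the associated multipliers, so the stationarity conditions for $\hat{\theta}_1$ and $\hat{\theta}_2(w)$ in (SPP-U) (the analogues of (\ref{SPPKKT14})--(\ref{SPPKKT15})) collapse to $\sum_j B_{ij}(\hat{\lambda}^*_{2,i}(w)-\hat{\lambda}^*_{2,j}(w))=0$ and $\sum_j B_{ij}(\hat{\lambda}^*_{1,i}-\hat{\lambda}^*_{1,j})=0$ for all $i,w$. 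Since the network is connected, the weighted Laplacian built from $B$ has kernel $\operatorname{span}\{\mathbf{1}\}$, whence $\hat{\lambda}^*_{1,i}\equiv\lambda^*_1$ and $\hat{\lambda}^*_{2,i}(w)\equiv\lambda^*_2(w)$. Consequently under $b^*$ \emph{every} generator and LSE submits the \emph{same} scalars $\lambda^*_1$ and $\lambda^*_2(w)$.

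Next I would pin down the payoffs at $b^*$. Under $b^*$, Proposition~\ref{efficient_bid_prop} allows the ISO to select the efficient allocation with LMPs $\lambda^*$, and I would fix this as the outcome of the (degenerate) tie. By Theorem~\ref{compeq_existence} those efficient quantities are the competitive-equilibrium quantities, so for each $(i,k)$, $(\hat{y}^{G*}_{1,k},\hat{y}^{G*}_{2,k}(\cdot))$ is optimal for $(\text{GEN-P}_{i,k})$ at prices $\lambda^*$ and $(\hat{y}^{L*}_{1,k},\hat{y}^{L*}_{2,k}(\cdot))$ is optimal for the $u_{i,k}$-reformulated LSE problem at prices $\lambda^*$. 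Writing $V^G_{i,k}(\cdot)$ and $V^L_{i,k}(\cdot)$ for the optimal values of these two single-agent problems as functions of the price vector, this yields $\pi^G_{i,k}(b^*)=V^G_{i,k}(\lambda^*)$ and $\pi^L_{i,k}(b^*)=V^L_{i,k}(\lambda^*)$; and $V^G_{i,k}$ is nondecreasing, $V^L_{i,k}$ nonincreasing, in each price component.

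For the deviation bound, fix $(i,k)$ and a unilateral deviation $b_{i,k}$, and consider first the case where (DED) stays bounded, with $(\hat{y},\hat{\lambda})$ a primal--dual optimum. Because every other participant bids the common scalars $\lambda^*_1,\lambda^*_2(w)$, and summing (\ref{DED_powerbal1})--(\ref{DED_powerbal2}) over $i$ forces $\sum_{\mathcal{G}}\hat{y}^G=\sum_{\mathcal{L}}\hat{y}^L$ in every stage and scenario, the entire (DED) objective, evaluated at any feasible allocation, collapses to the deviator's own bid surplus --- $\sum_c w_c(\lambda^*_c-b^G_{i,k,c})\hat{y}^G_{i,k,c}$ for a generator and $\sum_c w_c(b^L_{i,k,c}-\lambda^*_c)\hat{y}^L_{i,k,c}$ for an LSE, with $c$ running over the first stage and the scenarios and weights $1$ and $p_w$. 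Hence the ISO awards the deviator a positive quantity in component $c$ only when the matching coefficient is nonnegative, and (DED) stationarity in $\hat{y}_{i,k,c}$ then makes the deviator's nodal price there equal its own bid; so on every component where the deviator transacts, a generator sees $\hat{\lambda}_{i,k,c}=b^G_{i,k,c}\le\lambda^*_c$ and an LSE sees $\hat{\lambda}_{i,k,c}=b^L_{i,k,c}\ge\lambda^*_c$. In either case the deviator's realized payoff (computed with its true cost/utility) is at most the payoff it would obtain by transacting the very same quantities as a price taker facing the uniform prices $\lambda^*$ --- because wherever it transacts a positive amount, its realized price already lies on its unfavorable side of $\lambda^*_c$ --- and that price-taker payoff is $\le V^G_{i,k}(\lambda^*)$ (resp.\ $\le V^L_{i,k}(\lambda^*)$) by definition. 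Together with the previous paragraph this gives (\ref{nash_eq_gen})--(\ref{nash_eq_lse}). The deviations left out --- those driving (DED) unbounded, i.e.\ a generator bidding strictly below some $\lambda^*_c$ or an LSE strictly above --- are disposed of directly: underbidding only depresses the price a generator is paid, and overbidding only inflates the volume an LSE pays for past the saturation of $u_{i,k}$, so neither can help; Assumption~\ref{exclude_gen} enters here too, ensuring no generator is pivotal, so that $\lambda^*$ is finite and a ``withhold'' deviation earns zero award rather than an extortionate one.

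The step I expect to be the real obstacle is the pervasive degeneracy of (DED): once many bids coincide with $\lambda^*$ the (DED) objective is flat over the whole feasible polytope, so the outcome --- and hence the payoff vector --- is defined only up to the ISO's tie-break. I would handle this by (i) fixing the tie-break so that $b^*$ produces the efficient allocation, which is legitimate precisely because Proposition~\ref{efficient_bid_prop} exhibits it as an optimum of (DED) under $b^*$, and (ii) observing that after \emph{any} deviation, (DED) stationarity still pins the deviator's \emph{own} nodal price to its \emph{own} bid on every component it is actually dispatched in --- which is all the payoff comparison requires, irrespective of what the ISO does with the indifferent mass. The remaining bookkeeping is the direction in which a unilateral deviation can move that price, namely only against the deviator, and this is exactly where the congestion-free condition is used: it collapses the non-deviators' bids to one common scalar, which is what makes the bid-surplus reduction above hold exactly.
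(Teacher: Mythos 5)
Your proposal is correct and follows essentially the same route as the paper's proof: take the efficient bid profile of Proposition~\ref{efficient_bid_prop}, use the congestion-free condition to make all non-deviators' bids the common uniform prices, argue that a unilateral deviation can move the deviator's realized nodal price only in its unfavorable direction (you get this from the collapsed bid-surplus objective and complementary slackness, the paper from collecting the (DED) KKT inequalities (\ref{DEDKKT1})--(\ref{DEDKKT8})), and then bound the deviation payoff by the price-taking optimum at $\lambda^*$, which by the (SPP-P)/(SPP-U) KKT conditions is attained at the equilibrium quantities. Your explicit treatment of the ISO tie-break and of deviations that make (DED) unbounded is extra bookkeeping the paper leaves implicit, but it does not change the argument.
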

\begin{proof}
Considering bid profile $b^*$ as in (\ref{efficient_bid}). Note that under the congestion-free condition, the bid profile of all generators and LSEs are identical, as the LMPs become uniform across all nodes in the network \cite{morales2013integrating}. Denote these uniform prices as $\lambda^*_1(b^*)$ and $\lambda^*_2(b^*,w)$ for all $w$.  By Proposition \ref{efficient_bid_prop}, the bid profile is efficient, inducing optimal dispatch $\{y^{G*}_1,y^{L*}_1,y^{G*}_2(\cdot),y^{L*}_2(\cdot),\theta^*_1,\theta^*_2(\cdot),\gamma^*_1,\gamma^*_2(\cdot)\}$. 

To show that this bid profile is an efficient Nash equilibrium, we start with the generators. Collecting constraints (\ref{DEDKKT1}) and (\ref{DEDKKT3}) across all nodes and generators gives that 
\begin{align}
\hat{\lambda}^*_{1,i}(b^*)\leq \min_{j,k\in\mathcal{G}_j}\,b^{G*}_{1,k} &= \lambda^*_1(b^*)\\
\hat{\lambda}^*_{2,i}(b^*,w)\leq \min_{j,k\in\mathcal{G}_j}\,b^{G*}_{2,k}(w) &= \lambda^*_2(b^*,w).
\end{align}
Note that when \emph{any} electricity is generated anywhere in the network for a given stage, the inequalities above are tight. If a generator $(i,k)$ deviates in its bid from $b^{G*}_{i,k}$ to $b^G_{i,k}$, then these inequalities become 
\begin{align}
\label{gen_deviation_ineq1}\hat{\lambda}^*_{1,i}\left(b^G_{i,k},b^*_{-(i,k)}\right)\leq &\min\left\{b_{1,k},\lambda^*_1(b^*)\right\}\\
\label{gen_deviation_ineq2}\hat{\lambda}^*_{2,i}\left(b^G_{2,k},b^*_{-(i,k)}\right)\leq &\min\left\{b_{2,k}(w),\lambda^*_2(b^*,w)\right\}
\end{align}
Inequalities (\ref{gen_deviation_ineq1}) and (\ref{gen_deviation_ineq2}) show that the generator can only bring its nodal price down by deviating in its bid. Setting $b = \left(b^G_{i,k},b^*_{-(i,k)}\right)$, its expected profit when deviating unilaterally becomes 
\begin{align}
\nonumber\pi^G_{i,k}(b) &= \hat{\lambda}^*_{1,i}(b)\hat{y}^{G*}_{1,k}(b) - c_{1,k}(\hat{y}^{G*}_{1,k}(b))+ \sum_w\left(\hat{\lambda}^*_{2,i}(b,w)\hat{y}^{G*}_{2,k}(b,w) - c_{2,k}(\hat{y}^{G*}_{2,k}(b,w))\right)p_w\\
\label{gen_max_target}&\leq \hat{\lambda}^*_{1,i}(b^*)\hat{y}^{G*}_{1,k}(b) - c_{1,k}(\hat{y}^{G*}_{1,k}(b)) + \sum_w\left(\hat{\lambda}^*_{2,k}(b^*,w)\hat{y}^{G*}_{2,k}(b,w) - c_{2,k}(\hat{y}^{G*}_{2,k}(b,w))\right)p_w
\end{align}
Given $\hat{\lambda}^*_{1,i}(b^*)$ and $\hat{\lambda}^*_{2,i}(b^*,w)$ for all $w$, (SPP-P) KKT conditions (\ref{SPPKKT1})-(\ref{SPPKKT4}) constitute first order conditions for maximization of the right hand side of (\ref{gen_max_target}). This implies that out of all possible allocations, $\{\hat{y}^{G*}_{1,k}(b^*),\hat{y}^{G*}_{2,k}(b^*,w)\}$ maximizes the right hand side of (\ref{gen_max_target}), and 
\begin{align*}
\nonumber\pi^G_{i,k}(b) &\leq \hat{\lambda}^*_{1,i}(b^*)\hat{y}^{G*}_{1,k}(b^*) - c_{1,k}(\hat{y}^{G*}_{1,k}(b^*))+ \sum_w\left(\hat{\lambda}^*_{2,i}(b^*,w)\hat{y}^{G*}_{2,k}(b^*,w) - c_{2,k}(\hat{y}^{G*}_{2,k}(b^*,w))\right)p_w\\
&= \pi^G_{i,k}(b^*).
\end{align*}
Therefore, no generator can increase its payoff by unilaterally deviating from $b^{G*}_{i,k}$ in its bid.

Turning to the LSEs, collecting (DED) KKT conditions (\ref{DEDKKT5}) and (\ref{DEDKKT7}) across all nodes and LSEs gives that
\begin{align}
\lambda^*_1(b^*) = \max_{j,k\in\mathcal{L}_j}\,b^{L*}_{1,k}&\leq \hat{\lambda}^*_{1,i}\\
\lambda^*_2(b^*,w) = \max_{j,k\in\mathcal{L}_j}\,b^{L*}_{2,k}(w)&\leq \hat{\lambda}^*_{2,i}(w).
\end{align}
If an LSE $(i,k)$ deviates in its bid from $b^{L*}_{i,k}$ to $b^L_{i,k}$, then these inequalities become 
\begin{align}
\max\left\{b^L_{1,k},\lambda^*_1\right\}&\leq \hat{\lambda}^*_{1,i}\\
\max\left\{b^L_{2,k}(w),\lambda^*_2(b^*,w)\right\}&\leq \hat{\lambda}^*_{2,i}(w).
\end{align}
Thus, any LSE $(i,k)$ can only drive their nodal electricity prices up when deviating from $b^{L*}_{i,k}$. Setting $b=(b^L_{i,k},b^*_{-(i,k)})$, its expected payoff when deviating unilaterally becomes 
\begin{align}
\nonumber\pi^L_{i,k}(b) &= -\hat{\lambda}^*_{1,i}(b)\hat{y}^{L*}_{1,k}(b)+ \sum_w(u_{i,k}(\hat{y}^{L*}_{1,k}(b),\hat{y}^{L*}_{2,k}(b,w),w)p_w- \sum_w\hat{\lambda}^*_{2,k}(b,w)\hat{y}^{L*}_{2,k}(b,w))p_w\\
\label{lse_max_target} &\leq -\hat{\lambda}^*_{1,i}(b^*)\hat{y}^{L*}_{1,k}(b)+\sum_wu_{i,k}(\hat{y}^{L*}_{1,k}(b),\hat{y}^{L*}_{2,k}(b,w),w)p_w -\sum_w\hat{\lambda}^*_{2,k}(b^*,w)\hat{y}^{L*}_{2,k}(b,w))p_w.
\end{align}
The KKT conditions for (SPP-U) concerning the LSE utility function terms are as follows:
\begin{align*}
    \hat{\lambda}^*_{1,i} - \sum_wu'_{i,k}(\hat{y}^{L*}_{1,k},\hat{y}^{L*}_{2,k}(w),w)p_w&\geq 0, \,\,\forall\,i,k\in\mathcal{L}_i\\
    \hat{y}^{L*}_{1,k}\left(\hat{\lambda}^*_{1,i} - \sum_wu'_{i,k}(\hat{y}^{L*}_{1,k},\hat{y}^{L*}_{2,k}(w),w)p_w\right)&\geq 0, \,\,\forall\,i,k\in\mathcal{L}_i,
    \end{align*}
    and for all $w$
    \begin{align*}
    \hat{\lambda}^*_{2,i}(w) - u'_{i,k}(\hat{y}^{L*}_{1,k},\hat{y}^{L*}_{2,k}(w),w)&\geq 0, \,\,\forall\,i,k\in\mathcal{L}_i\\
    \hat{y}^{L*}_{2,k}(w)\left(\hat{\lambda}^*_{2,i}(w) - u'_{i,k}(\hat{y}^{L*}_{1,k},\hat{y}^{L*}_{2,k}(w),w)\right)&\geq 0, \,\,\forall\,i,k\in\mathcal{L}_i.
\end{align*}
Note that since it is the sum of $\hat{y}^L_{1,k}$ and $\hat{y}^L_{2,k}(w)$ that appears in $\text{LSE}_{i,k}(\hat{y}^L_{1,k},\hat{y}^L_{2,k}(w),w)$, the set of subgradients of $u_{i,k}(\cdot,\cdot,w)$ with respect to $\hat{y}^L_{1,k}$ or $\hat{y}^{L}_{2,k}(w)$ is the same, and wherever $u_{i,k}(\cdot,\cdot,w)$ is differentiable, the derivative with respect to either argument is the same.

As in the proof for the generators, these conditions show that given $\hat{\lambda}^*_{1,i}$ and $\hat{\lambda}_{2,i}(w)$ for all $w$, out of all possible allocations, $\{\hat{y}^{L*}_{1,k}(b^*),\hat{y}^{L*}_{2,k}(b^*,w)\}$ maximizes the right hand side of (\ref{lse_max_target}), and 
\begin{align*}
\pi^L_{i,k}(b)&\leq -\hat{\lambda}^*_{1,i}(b^*)\hat{y}^{L*}_{1,k}(b^*)+ \sum_wu_{i,k}(\hat{y}^{L*}_{1,k}(b^*),\hat{y}^{L*}_{2,k}(b^*,w),w)p_w- \sum_w\hat{\lambda}^*_{2,k}(b^*,w)\hat{y}^{L*}_{2,k}(b^*,w))p_w= \pi^L_{i,k}(b^*). 
\end{align*}
Therefore, no LSE can increase its payoff by unilaterally deviating from $b^{L*}_{i,k}$ in its bid. 
\end{proof}
The second sufficient condition for existence of an efficient Nash equilibrium for (DED) is the following. 
\begin{definition}
    \emph{(Monopoly-Free Condition)} At each bus, there are either at least two generators, or no generators at all. Also, at each bus,  there are either at least two LSEs, or no LSEs at all. 
\end{definition}

Before proving the sufficiency of the monopoly-free condition for ensuring the existence of an efficient Nash equilibrium, we provide a counter example to demonstrate the necessity of our reformulation of the (SPP-P) problem in terms of LSE utility functions. The following example demonstrates that if LSEs bid marginal costs for demand response, then the monopoly-free condition fails to ensure the existence of an efficient Nash equilibrium. 
\begin{example} For this counter example, we set aside planned blackouts, and uncertainty in generation, so that (SPP-P) reduces to a single stage problem, and consider the following network, illustrated in Figure \ref{networkfig}.
\begin{figure}[h]
\begin{center}
\includegraphics[width=3.45in]{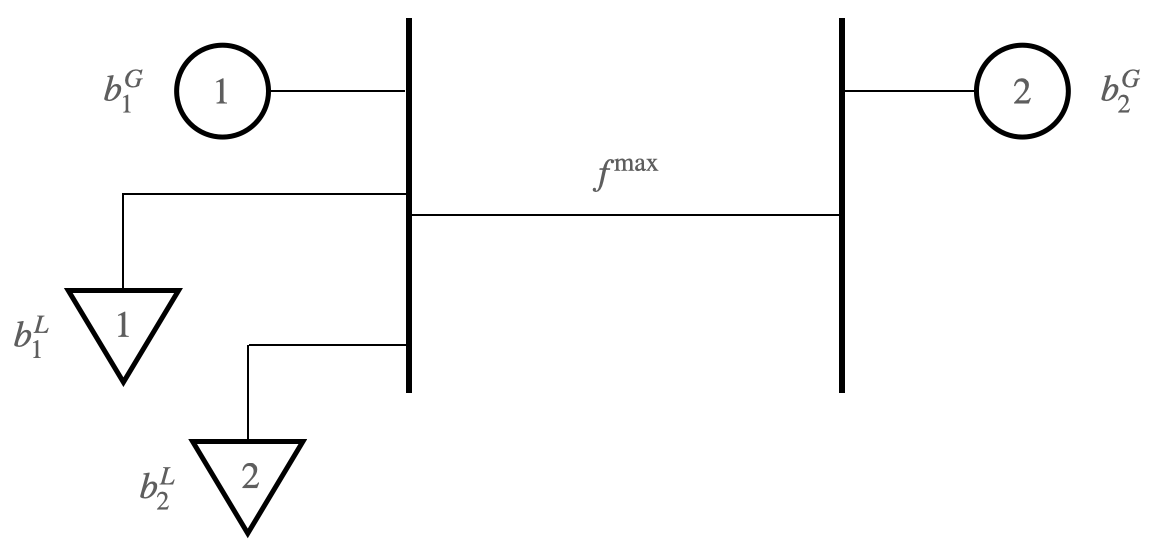}
\caption{Network diagram for Example 1. Generators are represented by circles, LSEs are represented by triangles.}
\label{networkfig}
\end{center}
\end{figure}
\end{example}
Suppose that both the generator cost functions and LSE demand response cost functions are quadratic, with no constant terms. There are two buses, the first with a single generator and two LSEs, the second with a single generator. For this example the generators are assumed to be nonstrategic, as we focus on the strategic behavior of the LSEs. We simplify the subscripts here to refer only to specific generators and LSEs: 
\begin{align*}
c_1(y) &= 80(y)^2+40y\\
c_2(y) &= 40(y)^2+20y\\
c_{\text{dr},1}(x) &= 10(x)^2+20x,\quad D_1 = 30\\
c_{\text{dr},2}x) &= 10(x)^2 + 30x,\quad D_2=20
\end{align*}
For the network characteristics, we take $f^{\max}_{12}=f^{\max}_{21} = 2$, and $B_{12}=B_{21}=1$. The optimal primal solution for (SPP-U) with these parameters is as follows:
\begin{gather}
\label{example_primal_sol1}\hat{y}^{G*}_1 = 3,\quad \hat{y}^{G*}_2 = 2,\quad \hat{\theta}^* = 2\\
\label{example_primal_sol2} (\hat{x}^{L*}_1,\hat{y}^{L*}_1) = (25,5),\quad (\hat{x}^{L*}_2,\hat{y}^{L*}_2) = (30,0) 
\end{gather}
with optimal dual solution 
\begin{gather}
\label{example_dual_sol1}\hat{\lambda}^*_1 = 520,\quad \hat{\lambda}^*_2 = 180,\quad \hat{\mu}^*_1 = 520,\quad \hat{\mu}^*_2 = 430,\\
\label{example_dual_sol2}\hat{\gamma}^*_{12} = 0,\quad \hat{\gamma}^*_{21} = 340. 
\end{gather}
If the generators each bid the LMP for their respective node, and the LSEs bid $\hat{\mu}^*$, corresponding to their respective marginal costs for demand response at the optimal allocation (\ref{example_primal_sol1})-(\ref{example_primal_sol2}), i.e.,  if each entity bids their respective term from (\ref{example_dual_sol1})-(\ref{example_dual_sol2}), then the (SPP-P) primal and dual solutions are optimal for the corresponding (DED) game.  

It can be shown that LSE 1 will not benefit from increasing its bid, which could only cause an increase in $\hat{\lambda}^*_1$, the price it pays for electricity. However, due to the KKT conditions of the (DED) game, it can unilaterally deviate and bid in the range 
\begin{equation}
b^{L*}_2 = \hat{\mu}^*_2(b^*) = 430 \leq b^L_1 \leq 520 = \hat{\lambda}^*_1(b^*) = b^{G*}_1,
\end{equation}
where the dependence of the (DED) dual variables on the bid profile is made explicit, and $b^*$ is the efficient bid profile given in (\ref{efficient_bid}). Under bid profile $b^*$, the payoff for LSE 1 is 
\begin{align}
\pi^L_1(b^*) &= -\hat{\lambda}^*_1(b^*)\hat{y}^{L*}_1(b^*) - c_{\text{dr},1}(\hat{x}^{L*}_1(b^*))\\
&= -520\cdot 5 - 10\cdot 25^2 - 20\cdot25\\
&= -9350.
\end{align}
Suppose that LSE 1 deviates and bids $b^L_1 = 440$. Then, denoting the bid including LSE 1's deviation as $b$, the optimal (DED) primal solution becomes 
\begin{gather*}
\hat{y}^{G*}_1(b) = 0,\quad \hat{y}^{G*}_2(b) = 2,\quad \hat{\theta}^*(b) = 2\\
 (\hat{x}^{L*}_1(b),\hat{y}^{L*}_1(b)) = (28,2),\quad (\hat{x}^{L*}_2(b),\hat{y}^{L*}_2(b)) = (30,0). 
\end{gather*}
Note that since the reported cost of generator 2 remains lower than the cost of demand response for either LSE, the planner still finds it optimal to dispatch generator 2 to the extent that the transmission line constraint $f^{\max}_{12} = 2$ allows. Since both LSEs now underbid generator 1, generator 1 is not dispatched. The two units provided by generator 2 go to LSE 1, since it still reports a higher demand response cost than LSE 2, and LSE 1's reported demand response cost now sets the price of energy at node 1. The corresponding dual solution is 
\begin{gather*}
\hat{\lambda}^*_1(b) = 440,\quad \hat{\lambda}^*_2(b) = 180,\quad \hat{\mu}^*_1(b) = 440,\quad \hat{\mu}^*_2(b) = 430\\
\hat{\gamma}^*_{12}(b) = 0,\quad \hat{\gamma}^*_{21}(b) = 260, 
\end{gather*}
and LSE 1's payoff under bid profile $b'$ is 
\begin{align}
\pi^L_1(b) &= -\hat{\lambda}^*_1(b)\hat{y}^{L*}_1(b) - c_{\text{dr},1}(\hat{x}^{L*}_1(b))\\
&= -440\cdot 2 - 10\cdot 28^2 - 20\cdot28\\
&= -9280> -9350 =  \pi^L_1(b^*).
\end{align}
Thus, while LSE 1 consumes less energy at the reduced price, this reduction in cost of energy consumption outweighs the increased cost of demand response it must take on in the outcome under demand profile $b$. \\
Note that if LSE 2 were to also consume energy, then its bid $b^{L*}_2$ would be equal to 520 as well, and LSE 1 would not be able to bring the cost of electricity at node 1 down with its demand response bid. In essence, our reformulation of (SPP-P) in terms of LSE utility functions for electricity consumption, together with the prescribed bid format and efficient bid (\ref{efficient_bid}) makes the monopoly-free a sufficient condition because it ensures that for each LSE in the network at any given node, there will always exist another LSE bidding the equilibrium price at the same node. Therefore, LSEs which underbid will be forced to fulfill their demand entirely through demand response or planned blackouts, which can be shown to yield a payoff no better than their equilibrium payoff. The details can be found in the following proof. 

\begin{theorem}
	Under Assumption \ref{exclude_gen} and the monopoly-free condition, there exists an efficient Nash equilibrium for (DED). 
\end{theorem}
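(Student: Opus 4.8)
The plan is to prove that the efficient bid profile $b^*$ defined in (\ref{efficient_bid}) is itself a sequential Nash equilibrium; by Proposition \ref{efficient_bid_prop} it already induces an efficient (SPP-U) allocation, so the entire content of the theorem is to rule out unilateral profitable deviations. The argument parallels the proof of the congestion-free case almost line for line, with exactly one structural substitution: there, price uniformity let one pin each LMP to the network-wide minimum generator bid (resp.\ maximum LSE bid); here uniformity is replaced by the monopoly-free hypothesis, which guarantees that whenever a generator (resp.\ LSE) sits at bus $i$ there is a second, co-located generator (resp.\ LSE) which does not deviate and hence continues to bid $\lambda^*_{1,i}$ and $\lambda^*_{2,i}(w)$. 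Throughout, fix a deviator and write $b=(b_{i,k},b^*_{-(i,k)})$ for the resulting profile.

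For a deviating generator $(i,k)$, let $(i,k')$ be the non-deviating co-located generator supplied by the monopoly-free condition. Assuming for now that $b$ leaves (DED) bounded, so that an optimal dispatch and its KKT multipliers exist, the dual feasibility conditions (\ref{DEDKKT1}) and (\ref{DEDKKT3}) applied to generator $(i,k')$ give immediately $\hat{\lambda}^*_{1,i}(b)\le b^{G*}_{1,k'}=\lambda^*_{1,i}$ and $\hat{\lambda}^*_{2,i}(b,w)\le b^{G*}_{2,k'}(w)=\lambda^*_{2,i}(w)$ in every scenario. Since $\hat{y}^{G*}_{1,k}(b)\ge 0$ and $\hat{y}^{G*}_{2,k}(b,w)\ge 0$, substituting these bounds into the payoff $\pi^G_{i,k}(b)$ yields $\pi^G_{i,k}(b)\le \lambda^*_{1,i}\hat{y}^{G*}_{1,k}(b)-c_{1,k}(\hat{y}^{G*}_{1,k}(b))+\sum_w\big(\lambda^*_{2,i}(w)\hat{y}^{G*}_{2,k}(b,w)-c_{2,k}(\hat{y}^{G*}_{2,k}(b,w))\big)p_w$, and the (SPP-P) KKT conditions (\ref{SPPKKT1})--(\ref{SPPKKT4}) are precisely the first-order optimality conditions for maximizing this strictly concave function over nonnegative $(\hat{y}^{G}_{1,k},\hat{y}^{G}_{2,k}(\cdot))$, whose maximizer is the equilibrium allocation $(\hat{y}^{G*}_{1,k}(b^*),\hat{y}^{G*}_{2,k}(b^*,\cdot))$. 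Hence $\pi^G_{i,k}(b)\le\pi^G_{i,k}(b^*)$; this is verbatim the congestion-free argument once the LMP bound above is in hand, and Assumption \ref{exclude_gen} enters only to keep the deviator non-pivotal so that (DED) stays well posed against high bids.

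For a deviating LSE $(i,k)$, let $(i,k')$ be the non-deviating co-located LSE. This is the step where both the monopoly-free condition and the reformulation of (SPP-P) in terms of the valuation functions $u_{i,k}$ are jointly essential, and it is exactly the mechanism that Example 1 shows to break down when LSEs bid marginal demand-response costs directly. Dual feasibility conditions (\ref{DEDKKT5}) and (\ref{DEDKKT7}) applied to $(i,k')$ now pin the bus-$i$ price \emph{from below}, $\hat{\lambda}^*_{1,i}(b)\ge b^{L*}_{1,k'}=\lambda^*_{1,i}$ and $\hat{\lambda}^*_{2,i}(b,w)\ge b^{L*}_{2,k'}(w)=\lambda^*_{2,i}(w)$, so an underbid cannot shave the price without the deviator ceding the corresponding energy to $(i,k')$ and being forced onto costly demand response or blackouts. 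Using $\hat{y}^{L*}_{1,k}(b)\ge 0$ and $\hat{y}^{L*}_{2,k}(b,w)\ge 0$, substitution gives $\pi^L_{i,k}(b)\le -\lambda^*_{1,i}\hat{y}^{L*}_{1,k}(b)+\sum_w u_{i,k}(\hat{y}^{L*}_{1,k}(b),\hat{y}^{L*}_{2,k}(b,w),w)p_w-\sum_w\lambda^*_{2,i}(w)\hat{y}^{L*}_{2,k}(b,w)p_w$; since each $u_{i,k}(\cdot,\cdot,w)$ is concave (a negated convex optimal-value function), possibly nondifferentiable, I would invoke the subdifferential form of the (SPP-U) optimality conditions displayed in the congestion-free proof — together with the observation recorded there that the subgradients of $u_{i,k}(\cdot,\cdot,w)$ in its two arguments coincide, because only their sum enters $\text{LSE}_{i,k}(\cdot,\cdot,w)$ — to conclude that the equilibrium allocation maximizes the right-hand side over nonnegative allocations, whence $\pi^L_{i,k}(b)\le\pi^L_{i,k}(b^*)$.

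The main obstacle is not any of the inequalities above — which, as in the congestion-free case, are immediate from (DED) dual feasibility once localized to bus $i$ — but establishing that (DED) is well posed under every deviation, so that those KKT conditions are actually available. Because excess energy is disposed of at zero cost, a deviating LSE at a bus that also hosts a generator can bid above $\lambda^*_{1,i}$ and drive (DED) unbounded; I would dispose of this case by noting that along any such unbounded ray the deviator's own allocation $\hat{y}^{L*}_{1,k}(b)$ diverges while its true benefit $u_{i,k}$ saturates and the price remains at least $\lambda^*_{1,i}>0$, so its \emph{realized} payoff tends to $-\infty$ and the bid is never profitable; for all remaining bids (DED) is feasible and bounded, hence attains its optimum, and the argument of the previous two paragraphs applies. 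Two minor points remain: (DED) may have multiple optimal allocations, but strict convexity of the cost terms together with the subgradient optimality argument still bounds the deviator's payoff by $\pi^G_{i,k}(b^*)$ (resp.\ $\pi^L_{i,k}(b^*)$) uniformly over them; and a deviation may involve only some of the deviator's scenario bids, which is harmless since the rival's bids, and hence the price bounds at bus $i$, are unaffected scenario by scenario. I expect handling the unboundedness caveat to be the only real work beyond transcribing the congestion-free proof.
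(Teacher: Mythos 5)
Your proposal is correct and follows essentially the same route as the paper: you take the efficient bid profile from (\ref{efficient_bid}), use the non-deviating co-located generator (resp.\ LSE) guaranteed by the monopoly-free condition to bound the nodal price at bus $i$ from above (resp.\ below) via the (DED) dual feasibility conditions, and then reuse the congestion-free argument, with the (SPP-P)/(SPP-U) KKT and subgradient conditions showing the equilibrium allocation maximizes the resulting payoff bound. Your additional discussion of (DED) possibly becoming unbounded under an LSE overbid goes beyond the paper's (terser) proof, which implicitly assumes well-posedness under deviations, but it does not change the substance of the argument.
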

\begin{proof}
The proof for the sufficiency of the monopoly-free condition is similar to that of the congestion-free condition. Since there is no guarantee on congestion conditions, the optimal dual variables corresponding to the (SPP-U) power balance constraints (\ref{SPPU_powerbal1}) and (\ref{SPPU_powerbal2}) are $\lambda^*_1 = (\lambda^*_{1,1},\dots,\lambda^*_{1,N})^{\top}$ and $\lambda^*_{2,i}(w) = (\lambda^*_{2,1}(w),\dots,\lambda^*_{2,N}(w))^{\top}$ for all $i$. In general the individual entries of these vectors may take on different values.

Under the monopoly-free condition, the bounds on the nodal prices, i.e.,  the Lagrange multipliers corresponding to (DED) power balance constraints (\ref{DED_powerbal1})-(\ref{DED_powerbal2}) can only be aggregated per node. In this case of the generators, for an individual node $i$ this gives
\begin{align}
\hat{\lambda}^*_{1,i}(b^*)\leq \min_{k\in\mathcal{G}_i}\,b^{G*}_{1,k} &= \lambda^*_{1,i}(b^*)\\
\hat{\lambda}^*_{2,i}(b^*,w)\leq \min_{k\in\mathcal{G}_i}\,b^{G*}_{2,k}(w) &= \lambda^*_{2,i}(b^*,w).
\end{align}
Under unilateral deviation by generator $(i,k)$ to bid $b^G_{i,k}$, these inequalities become 
\begin{align}
\hat{\lambda}^*_{1,i}\left(b^G_{i,k},b^*_{-(i,k)}\right)&\leq \min\left\{ b^G_{1,k},\lambda^*_{1,i}(b^*)\right\}\\
\hat{\lambda}^*_{2,i}(b^G_{i,k},b^*_{-(i,k)},w)&\leq  \min\left\{ b^G_{2,k}(w),\lambda^*_{2,i}(b^*,w)\right\}.
\end{align}
Again, the nodal prices can only decrease as a result of the generator's unilateral deviation, so that the remainder of the previous proof can be used in this case. Similarly, the LSEs can only drive their nodal prices up, and it can be shown that their payoff will not increase via unilateral deviation either. Thus, the bid profile given in (\ref{efficient_bid}) constitutes a Nash equilibrium under the monopoly-free condition as well. 
\end{proof}
The key idea behind both of these conditions is that there are other market participants in the network bidding such individual generators or LSEs can only cause prices to change in a direction which does not lead to increased payoff. In the congestion-free case, these other participants can be located anywhere in the network, while in the monopoly-free case, they are colocated at the same node. Note that while the monopoly-free condition can be checked from the network topology alone, the congestion-free condition requires knowledge of the optimal solution, given complete transparency on the part of all market participants. Even if such information is available, the specifics of a given problem instance, i.e.,  existence of congestion at the optimal dispatch, may still preclude application of the congestion-free condition for guaranteeing existence of an efficient Nash equilibrium. 

\section{Conclusions}\label{sec:conclusions}

In this paper, we have proposed a two-stage market mechanism that integrates renewable energy generators as an alternative to the extant multi-settlement markets that are operated independently though the decision-making of the market participants is obviously coupled. We formulate the two stage economic dispatch problem as a two-stage stochastic program with recourse. We first show that a sequential competitive equilibrium indeed exists in such a two-stage market. We show that every sequential competitive equilibrium supports an efficient allocation, and conversely every efficient allocation can be supported by a sequential competitive equilibrium.  We then design a market mechanism for such settings. We showed that when market participants act strategically, and if either a \textit{congestion-free} or a \textit{monopoly-free} condition is satisfied, the  Nash equilibrium  of the two-stage market mechanism  exists and is efficient. We also gave a counterexample that if LSEs bid marginal costs, then these conditions are not enough to guarantee of an efficient Nash equilibrium. We have ignored physical aspects of the network such as transmission loss, though that could be incorporated as well at the risk of greater notational and proof complexity though the essence of the results presented would be the same. Additional economic aspects such as demand elasticity are left to future work.
	
\bibliography{Bibliography}
\bibliographystyle{IEEEtran}

\appendices
\end{document}